\DeclareMathOperator*{\argmin}{argmin} 
\newtheorem{theorem}{Theorem}[section]
\newtheorem{algorithm}{Algorithm}[section]
\newtheorem{lemma}{Lemma}[section]
\newtheorem{example}{Example}[section]
\newtheorem{definition}{Definition}[section]
\begin{document}
%
\title{Effect of Pixelation on the Parameter Estimation of Single Molecule Trajectories}
%
%
%
\author{Milad R. Vahid,
        Bernard Hanzon,
        and~Raimund J. Ober,~\IEEEmembership{Senior Member,~IEEE}
\thanks{This work was supported in part by the National Institutes of Health (R01 GM085575) and the Wellcome Trust (206411/Z/17/Z).}				
\thanks{Milad R. Vahid was with the Department of Biomedical Engineering, Texas A\&M University, College Station, TX, USA. He is now with the Department of Biomedical Data Science, Stanford University, Stanford, CA, USA e-mail: miladrv@stanford.edu.}
\thanks{Bernard Hanzon is with the Department of Mathematics, University College Cork, Ireland e-mail: b.hanzon@ucc.ie.}
\thanks{Raimund J. Ober is with the Centre for Cancer Immunology, Faculty of Medicine, University of Southampton, Southampton, UK e-mail: r.ober@soton.ac.uk.}}

\maketitle


\begin{abstract}
The advent of single molecule microscopy has revolutionized biological investigations by providing a powerful tool for the study of intercellular and intracellular trafficking processes of protein molecules which was not available before through conventional microscopy. In practice, pixelated detectors are used to acquire the images of fluorescently labeled objects moving in cellular environments. Then, the acquired fluorescence microscopy images contain the numbers of the photons detected in each pixel, during an exposure time interval. Moreover, instead of having the exact locations of detection of the photons, we only know the pixel areas in which the photons impact the detector. These challenges make the analysis of single molecule trajectories, from pixelated images, a complex problem. Here, we investigate the effect of pixelation on the parameter estimation of single molecule trajectories. In particular, we develop a stochastic framework to calculate the maximum likelihood estimates of the parameters of a stochastic differential equation that describes the motion of the molecule in living cells. 
We also calculate the Fisher information matrix for this parameter estimation problem.
 The analytical results are complicated through the fact that the observation process in a microscope prohibits the use of standard Kalman filter type approaches. 
The analytical framework presented here is illustrated with examples of low photon count scenarios for which we rely on Monte Carlo methods to compute the associated probability distributions.
\end{abstract}

\begin{IEEEkeywords}
Single molecule tracking, Pixelated detectors, Stochastic differential equations, Maximum likelihood estimation, Fisher information matrix, Cram\'er-Rao lower bound, Monte Carlo.
\end{IEEEkeywords}

%
\IEEEpeerreviewmaketitle

\section{Introduction}

The study of intercellular and intracellular trafficking processes of objects of interest has been the subject of many research projects during the past few decades. The advent of single molecule microscopy made it possible to observe and track single molecules in living cells, which were not achievable before using conventional microscopes \cite{Moerner1,saxton1,Walter,Georg,ram2008,Kusumi,review2015}.

In fluorescence microscopy, the photons emitted by a fluorescently labeled object located in the object space are detected by a planar detector in the image space. In the fundamental data model, we assume that the time points and locations of the photons emitted by the object are detected by an ideal unpixelated detector. However, in practice, pixelated detectors, such as charge-coupled device (CCD) and electron multiplying CCD (EMCCD) cameras, are commonly used for acquiring the image of the object. In this case, referred to as the practical data model, the measurements, i.e., the fluorescence microscopy images, consist of the numbers of the photons detected in each pixel. Moreover, instead of having the exact locations of detection of the photons, we only know the pixel areas in which the photons impact the detector. These challenges make the analysis of single molecule trajectories from pixelated images a complex problem.

In the literature, there are several methods available concerning the problem of the parameter estimation of single molecule trajectories in cellular contexts. The majority of these methods model the effect of pixelation by using an additive noise in the fundamental data model. However, in general, this approximation does not describe the underlying stochastic model precisely. For example, in \cite{Michalet2010,Berglund2010,Berglund2012}, by encapsulating the effect of pixelation in a Gaussian additive random variable, referred to as the localization uncertainty, Berglund and Michalet have proposed methods for the estimation of diffusion coefficients based on mean square displacement of the observed locations of the molecule. For a similar observation model, Relich et al. \cite{Lidke} have proposed a method for the maximum likelihood estimation of the diffusion coefficient, with an information-based confidence interval, from Gaussian measurements. Although using these approximate observation models makes all corresponding computations simpler, it does not model the effect of the pixelated camera accurately. Calderon has extended Berglund's motion blur model to handle confined dynamics \cite{calderon1,calderon2,calderon3}. His proposed approach enables the estimation of the parameters of the motion model of the molecule by considering confinement and motion blur within a time domain maximum likelihood estimation framework. In \cite{andersson}, for the single molecule trajectory parameter estimation problem, a more accurate model has been used to describe the image of pixelated detectors. In this model, the expected intensity measured in each pixel is obtained by integrating the image profile, which is expressed in terms of a scaled and shifted version of the point spread function, over the pixel area. Here, we use a similar approach to model pixelated data more accurately.

In \cite{miladsiam}, we developed a stochastic framework in which we calculate the maximum likelihood estimates of the parameters of the model that describes the motion of the molecule in cellular environments. More importantly, we proposed a general framework to calculate the Cram\'er-Rao lower bound (CRLB), given by the inverse of the Fisher information matrix, for the estimation of unknown parameters and use it as a benchmark in the evaluation of the standard deviation of the estimates. In \cite{miladsiam}, we focused on the fundamental microscopy data model, in which the image of a molecule is acquired by an unpixelated detector. 

In this paper, we propose a general framework to investigate the effect of pixelation of the detector on the parameter estimation of single molecule trajectories accurately. We extend our previous results obtained for deterministic trajectories \cite{lin} and show examples of low photon count scenarios. We also include an example analysis in which we show how the standard deviation of parameter estimates depends on the pixel size of the detector. We consider complex relationships between the single molecule motion model, the photon emission process and the underlying statistical model of photon detection in the pixels of the detector in deriving all analytical expressions, e.g., likelihood function and Fisher information matrix, and no approximations are made.

This paper is organized as follows. In Section \ref{notations}, we introduce important notations used to define different data models in the paper. In Section \ref{datamodel}, we define fundamental and practical data models and present mathematical descriptions of them. In Section \ref{linear sde}, we introduce continuous-time stochastic differential equations, which are used to model the motion of single molecules in cellular environments, and calculate their solutions at discrete time points. Section \ref{mle} is devoted to calculation of the maximum likelihood estimates of the parameters of the system based on the introduced motion and data models in the previous sections. In Section \ref{crlbsection}, we calculate the general expressions for the Fisher information matrix for both of the fundamental and practical data model. In Section \ref{noise_effect_sec}, we investigate the effect of noise on the expressions derived in the previous sections. Finally, a summary of the paper and conclusions are provided in Section \ref{conclusions}.

\section{Notations}
\label{notations}
In this section, we introduce the following notations that will be used throughout the paper.

Let $\mathcal{C}^p$ be a pixelated detector defined as the union of a collection $\left\{C_1,\cdots,C_K\right\}$ of connected open and disjoint subsets of a region within $\mathbb{R}^2$ corresponding to the photon detection area of the detector. We use the notation $A_K^L=\{1,\cdots,K\}^L$ to denote the set of vectors of all possible pixel labels of $L$ photons detected in the pixelated image (see Section \ref{practical} for more detailed definitions). We also denote the Cartesian product of the pixel sets $C_{v_1},\ldots,C_{v_L}$, where $v_1, v_2,\cdots,v_L \in \{1,2,\ldots,K\}$ by $C_{v_{1:L}}=C_{v_1}\times C_{v_2} \times \ldots C_{v_L}$.
 
In this paper, for random vectors $X$ and $Y$, the conditional probability density function of $X$, given $Y$, is denoted by $p_{X|Y}$. For example, let  $X(\tau_1),\cdots,X(\tau_L)\in\mathbb{R}^3$ denote the locations of the molecule at a sequence of time points $(\tau_1,\cdots,\tau_L)\in\Delta^L:=\left\{(\tau_1,\ldots,\tau_L): t_0\leq \tau_1 <\tau_2 < \ldots <\tau_L \leq t\right\}$. Then, denoting the locations of the detected photons on the detector plane by $R_1:=U(X(\tau_1)),R_2:=U(X(\tau_2)),\cdots,R_L:=U(X(\tau_L))\in \mathbb{R}^2$, where $U$ is a random function that maps the object space into the image space, the conditional probability density function of $R_l$, given $R_{l-1}, l=2,\cdots,L$, is denoted by $p_{R_l|R_{l-1}}$. Note that in this paper, we use uppercase letters, e.g., $T_i$, to denote random variables, and lowercase letters, e.g., $\tau_i$, to denote particular values that the random variables can assume. We also denote the Poisson-distributed probability, with nonnegative intensity function $\Lambda(\tau), \tau\geq t_0$, of detecting $L$ photons in the time interval $[t_0,t]$ by $p_L:=\frac{e^{-\int_{t_0}^t\Lambda(\psi)d\psi}\left(\int_{t_0}^t\Lambda(\psi)d\psi\right)^L}{L!}$ (see Section \ref{fundamental} for more details).

Given the observed data $r\in\mathbb{R}^2$ with probability distribution $p^{\theta}_R$, where $\theta\in\mathbb{R}^n$ denotes the row vector of parameters, the Fisher information matrix $I(\theta)$ is given by
\begin{align}
I(\theta) = E\left\{\left( \frac{\partial}{\partial\theta}\log p^{\theta}_R(r)\right)^T\left( \frac{\partial}{\partial\theta}\log p^{\theta}_R(r)\right)\right\}.
\end{align} 
See Section \ref{crlbsection} for the detailed description of Fisher information matrix.

Also, in order to simplify vector representations of complex formulas in this paper, for arbitrary $n$-dimensional vectors $x_1,\cdots,x_L\in\mathbb{R}^n$ (or $n$-dimensional random variables $X_1,\cdots,X_L$), we define the $n\times L$ vector $x_{1:L}:=\left(x_1,\cdots,x_L\right)$ (or $X_{1:L}:=\left(X_1,\cdots,X_L\right)$). The integral of $f:\mathbb{R}^{n\times L}\mapsto\mathbb{R}$ over $\mathbb{R}^{n\times L}$ is denoted by $\int_{\mathbb{R}^{n\times L}}f(x_{1:L})dx_{1:L}$.

\section{Data model}
\label{datamodel}
In a standard optical microscope, the image of an object, which is in general moving in the object space, is captured by a detector in the image space.
In order to be able to model the photon detection process in a pixelated detector we firstly require a model of the precise impact locations on the detector of photons emitted by the imaged object.
Such a model, the \emph{fundamental data model}, has been developed in \cite{2004,2006,miladsiam}.
In this fundamental data model, we consider ideal conditions for the data acquisition procedure, in which it is assumed that we have an unpixelated image detector.
We briefly summarize this model here before proceeding to the development of the {\em practical data model}.
For this we use the fundamental data model to obtain a probabilistic description for the number of photons detected in each pixel of a pixelated detector.

\subsection{Fundamental data model}
\label{fundamental}

In the fundamental data model, the acquired data are the time points and locations of detection of the photons emitted from the object, where we have an unpixelated image detector. These time points and locations are intrinsically random. In general, the time points of detection of the emitted photons can be modeled as a counting process. The locations of detection of the photons emitted by the object are described by a random function that maps the object space into the image space. In this paper, for the fundamental data model, we assume that the locations of the photons, with the correct chronological ordering, emitted by the object can be detected by the detector.

We introduce the following notation. For $t_0\in\mathbb{R}$, let the random process $X(\tau), \tau\geq t_0$, describe the location of an object of interest, which emits photons, in the object space at time $\tau$. Let $\mathcal{C}:=\mathbb{R}^2$ denote a non-pixelated detector. Let $\left\{N(\tau), \tau\geq t_0\right\}$ be a Poisson process with non-negative and piecewise continuous intensity function $\Lambda(\tau), \tau\geq t_0$, that describes the time points of detection of the photons emitted by the object that impact the detector $\mathcal{C}$. These ordered time points, which are the events (jump points) of $\left\{N(\tau), \tau\geq t_0\right\}$, are denoted by one-dimensional (1D) random variables $t_0\leq T_1<T_2<\cdots$. The location of detection of the photon emitted by the object, at time $\tau\geq t_0$, that impacts the detector $\mathcal{C}$ is described by $U(X(\tau))$, where $U$ is a random function that maps the object space into the image space. For $x\in\mathbb{R}^3$, let $f_x$ denote the probability density function of $U(x)$, referred to as the \emph{image profile} of an object located at $x\in\mathbb{R}^3$ in the object space. In many practical scenarios, the image profile can be described as a scaled and shifted version of a function, referred to as the \emph{image function}, that describes the image of an object on the detector plane at unit lateral magnification. Assume that there exists a function $q_{z_0}{:}\ \mathbb{R}^2\mapsto\mathbb{R}, z_0\in\mathbb{R}$, such that for an invertible matrix $M\in\mathbb{R}^{2\times 2}$ and $x:=\left(x_0,y_0,z_0\right)\in\mathbb{R}^3$,
\begin{align}
\label{imagefunctionmain}
f_x\left(r\right):=\frac{1}{\left|\det\left(M\right)\right|}q_{z_0}\Big(M^{-1}r-(x_0,y_0)^T\Big),\quad r\in\mathcal{C}.
\end{align}

In particular, when the object is a point source and is in-focus with respect to the detector, according to optical diffraction theory, its image can be modeled by an Airy profile given by
\begin{align}
\label{airyprofileeq}
q(x_0,y_0)=\frac{J_1^2\left(\frac{2\pi n_a}{\lambda}\sqrt{x_0^2+y_0^2}\right)}{\pi\left(x_0^2+y_0^2\right)},\quad (x_0,y_0)\in\mathbb{R}^2,
\end{align}where $n_a$ denotes the numerical aperture of the objective lens, $\lambda$ denotes the emission wavelength of the molecule, and $J_1$ denotes the first order Bessel function of the first kind. In some applications, it is computationally more convenient to approximate the Airy profile by a Gaussian distribution given by
\begin{align}
\label{gaussianimagefunction}
q(x_0,y_0)=\frac{1}{2\pi\sigma^2}e^{-\frac{1}{2}\left(\frac{x_0^2+y_0^2}{\sigma^2}\right)},\quad (x_0,y_0)\in\mathbb{R}^2,
\end{align}where $\sigma>0$. 

For an out-of-focus point source, the image function can be modeled by the classical Born and Wolf model given by, for $(x_0,y_0)\in\mathbb{R}^2$, \cite{born}
\begin{small}
\begin{align}
\label{bornimagefunction}
q_{z_0}(x_0,y_0)=\frac{4\pi n_a^2}{\lambda^2}\left|\int_0^1J_0\left(\frac{2\pi n_a}{\lambda}\rho\sqrt{x_0^2+y_0^2}\right)e^{\frac{j\pi n_a^2z_0}{n_o\lambda}\rho^2}\rho d\rho\right|^2,
\end{align}
\end{small}where $J_0$ is the zeroth-order Bessel function of the first kind, $n_o$ is the refractive index of the objective lens immersion medium, and $z_0\in\mathbb{R}$ is the $z$-location of the point source on the optical axis in the object space.

The above mentioned image functions are only examples of the myriad of image functions that have been proposed and are used to describe image formation in a microscope and are themselves in fact approximations \cite{2004}. The subsequent developments are therefore carried out with a high level of generality to allow for the use of the most appropriate image function in a circumstance.

In \cite{miladsiam}, we use the fundamental data model for the maximum likelihood estimation of biophysical parameters such as diffusion and drift coefficients, from images acquired by an unpixelated detector. The joint probability density of the acquired data points, i.e. the impact locations of the detected photons on the detector, is the key probabilistic concept for our later developments. Specifically, in Theorem \ref{main0}, we calculate the conditional probability density function $p_{R_{1:L}|N(t)}$ of $R_1=U(X(T_1)),\cdots,R_L=U(X(T_L))$, given $N(t)$ for the fundamental data model. In the next section, we will use these results to characterize the acquired data from pixelated detectors.

\begin{theorem}
\label{main0}
The conditional probability density function $p_{R_{1:L}|N(t)}$ of $R_1,\cdots,R_L$, given $N(t)=L$, can be calculated as
\begin{small}
\begin{align}
\label{likelihood1}
&p_{R_{1:L}|N(t)}\left(r_{1:L}|L\right)\nonumber\\
&\ \ \ =\int_{\mathbb{R}^{3\times L}}\left(\int_{\Delta^L}F\left(r_{1:L},x_{1:L},\tau_{1:L}\right)d\tau_{1:L}\right) dx_{1:L},
\end{align}
\end{small}where
\begin{small} 
\begin{align}
\label{F_theta}
&F\left(r_{1:L},x_{1:L},\tau_{1:L}\right)\nonumber\\
&\ \ \ :=\frac{L!}{\left(\int_{t_0}^t\Lambda(\psi)d\psi\right)^L}\left(\prod_{i=1}^Lf_{x_i}(r_i)\Lambda(\tau_i)\right)p_{X(\tau_1),\cdots,X(\tau_L)}(x_{1:L}),
\end{align}
\end{small}is the conditional probability density function over the set of all $(r,x,\tau)$, given $N(t)=L$, and $p_{X(\tau_1),\cdots,X(\tau_L)}$ is the joint probability density function of $X(\tau_1),\cdots,X(\tau_L)$.

If $\left\{X(\tau_1),\cdots,X(\tau_L)\right\}$ is a Markov sequence, then,
\begin{small}
\begin{align*}
p_{X(\tau_1),\cdots,X(\tau_L)}\left(x_{1:L}\right)=p_{X(\tau_1)}\left(x_1\right)\prod_{l=1}^Lp_{X(\tau_l)|X(\tau_{l-1})}\left(x_l|x_{l-1}\right), 
\end{align*}
\end{small}where $p_{X(\tau_l)|X(\tau_{l-1})}, l=2,\cdots,L$, is the conditional probability density function of $X(\tau_l)$, given $X(\tau_{l-1})$, and $p_{X(\tau_1)}$ is the probability density function of $X(\tau_1)$.
\end{theorem}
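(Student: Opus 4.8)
The plan is to build up the joint density of $(T_{1:L}, X(T_{1:L}), R_{1:L})$ conditional on $N(t)=L$ in layers, integrate out the nuisance variables $\tau$ and $x$, and recognize the result as the claimed formula. Since $N(t)=L$ is given, the randomness in $R_{1:L}$ enters only through the jump times $T_{1:L}$, the trajectory positions $X(T_{1:L})$, and the imaging map $U$. First I would write the law of total probability in the form
\[
p_{R_{1:L}|N(t)}(r_{1:L}|L) = \int_{\Delta^L} p_{R_{1:L}|T_{1:L},N(t)}(r_{1:L}|\tau_{1:L},L)\, p_{T_{1:L}|N(t)}(\tau_{1:L}|L)\, d\tau_{1:L},
\]
and treat the two factors separately, since the prefactor of $F$ will come from the time factor and the remaining product from the location factor.

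For the time factor I would invoke the order-statistics property of the inhomogeneous Poisson process $\{N(\tau)\}$: conditional on $N(t)=L$, the ordered jump times $T_1<\cdots<T_L$ admit the joint density
\[
p_{T_{1:L}|N(t)}(\tau_{1:L}|L) = \frac{L!}{\left(\int_{t_0}^t \Lambda(\psi)\,d\psi\right)^L}\prod_{i=1}^L \Lambda(\tau_i), \qquad \tau_{1:L}\in\Delta^L.
\]
I would derive this directly from the independent-increments structure by taking the probability of exactly one event in each infinitesimal window around $\tau_1,\ldots,\tau_L$ and none elsewhere, proportional to $e^{-\int_{t_0}^t\Lambda(\psi)d\psi}\prod_{i=1}^L\Lambda(\tau_i)$, and normalizing by $p_L=\Pr(N(t)=L)$; this produces the $L!/(\int_{t_0}^t\Lambda)^L$ prefactor and the $\prod_i\Lambda(\tau_i)$ term of $F$.

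For the location factor I would condition once more, on the positions $X(\tau_{1:L})=x_{1:L}$. Each detected location is then $R_i=U(x_i)$ with density $f_{x_i}(r_i)$ by definition of the image profile, and because the imaging map acts independently on distinct photons the conditional density of $R_{1:L}$ factorizes as $\prod_{i=1}^L f_{x_i}(r_i)$. Marginalizing over the positions against their joint law $p_{X(\tau_1),\ldots,X(\tau_L)}$, and using that the trajectory is independent of the detection-time mechanism, yields
\[
p_{R_{1:L}|T_{1:L},N(t)}(r_{1:L}|\tau_{1:L},L) = \int_{\mathbb{R}^{3\times L}}\left(\prod_{i=1}^L f_{x_i}(r_i)\right) p_{X(\tau_1),\ldots,X(\tau_L)}(x_{1:L})\, dx_{1:L}.
\]
Substituting both factors into the master decomposition and interchanging the $\tau$- and $x$-integrations by Fubini (legitimate since the integrand is a nonnegative product of densities) gives exactly $\int_{\mathbb{R}^{3\times L}}\int_{\Delta^L} F\, d\tau_{1:L}\, dx_{1:L}$, identifying $F$ as the claimed joint density. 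The Markov statement is then an immediate corollary: applying the Markov property to $p_{X(\tau_1),\ldots,X(\tau_L)}$ factors it as $p_{X(\tau_1)}(x_1)\prod_{l=2}^L p_{X(\tau_l)|X(\tau_{l-1})}(x_l|x_{l-1})$.

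The main obstacle is the rigorous justification of the order-statistics density for a general piecewise-continuous $\Lambda$, and, alongside it, the conditional-independence assumption that the detection locations $R_i$ are independent across photons given the trajectory positions — this is what licenses the product $\prod_i f_{x_i}(r_i)$ and must be taken as part of the fundamental data model rather than derived. Once these two facts are in hand, the marginalizations and the Fubini interchange are routine.
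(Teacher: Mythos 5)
Your proposal is correct and follows essentially the same route as the paper's proof in Appendix A: both condition on the photon positions and arrival times, invoke the order-statistics density $p_{T_{1:L}|N(t)}(\tau_{1:L}|L)=L!\prod_i\Lambda(\tau_i)/(\int_{t_0}^t\Lambda(\psi)d\psi)^L$ (the paper's Lemma A.1, cited from Snyder and Miller), and use the conditional independence of the impact locations given the positions to obtain the product $\prod_i f_{x_i}(r_i)$. The only cosmetic difference is that you condition on the times first and the positions second, whereas the paper conditions on the positions first; the two orderings coincide after the Fubini interchange you already justify.
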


\begin{proof}
See Appendix \ref{proof_main0}.$\hfill\Box$ 
\end{proof}

\subsection{Practical data model}
\label{practical}

In practice, pixelated detectors, e.g. CCD and EMCCD cameras, are commonly used for acquiring images of fluorescently labeled objects. In this subsection, we describe the practical data model.

\begin{figure}[htbp]
\centering\includegraphics[width=0.5\textwidth]{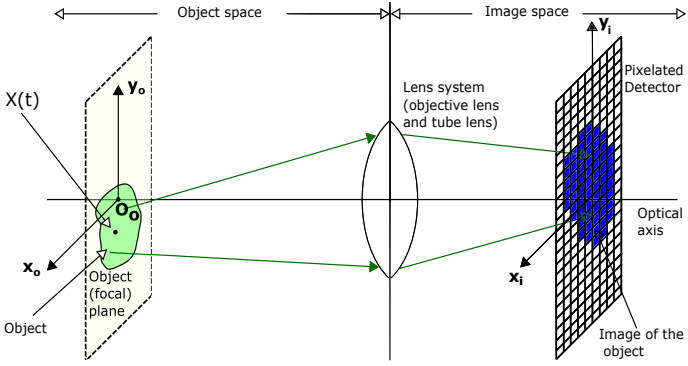}
\DeclareGraphicsExtensions{.eps} \caption{Schematic of an optical microscope.
An object located in the object (focal) plane is imaged by an optical lens system and the image of the object is acquired by a pixelated detector in the image space. A random variable $X(t), t\geq t_0$, describes the location of the object in the object plane at time $t$.}
\label{drawing}
\end{figure}

In the practical data model, the data acquired by a pixelated detector are the number of detected photons at each pixel (Fig. \ref{drawing}). Let the pixelated detector $\mathcal{C}^p$ be defined as the union of a collection $\left\{C_1,\cdots,C_K\right\}$ of connected open and disjoint subsets of a region within $\mathbb{R}^2$ corresponding to the photon detection area of the detector, such that $\bigcup_{k=1}^KC_k=\mathcal{C}^p$. We use the random variable $S_k, k=1,\cdots,K$, to describe the number of photons in the pixel $C_k$ that result from the detection of photons from the object of interest. Let $L=\sum_{k=1}^{K} S_k$ denote the total number of photons that impact the detector in a given experiment. We now need a notation that allows us to translate the information of the pixel labels of the individual photons to the number of photons in each pixel (i.e. the $S_k$ values). Note that $L=0$ if and only if $S_k=0$ for all $k=1,2,\cdots,K$, which is a trivial case. Now assume $L\geq 1$.

Let $A_K^L=\{1,\cdots,K\}^L$ denote the set of vectors of all possible pixel labels of $L$ photons. This means that each $L$ element vector in $A_K^L$ contains the pixel numbers in which each of the detected photons is captured. For the simple example where we have $K=2$ pixels and $L=3$ photons, for each vector in the set $A^3_2$, the first component denotes the pixel in which the first detected photon is captured, etc. The vector $v=(2,1,2)\in A^3_2$ implies that the first photon is captured by pixel $2$, the second photon by pixel $1$ and the third photon again by pixel $2$. The set of all such vectors for this example is given by
\begin{small} 
\begin{align*}
A^3_2:&=\left\{(1,1,1),(1,1,2),(1,2,1),(1,2,2),(2,1,1),\right.\nonumber\\
&\left.\ \ \ \ \ \ \ \ \ \ \ \ (2,1,2),(2,2,1),(2,2,2)\right\}.
\end{align*}
\end{small}
Note that for a vector $v=(v_1,v_2,\dots,v_L)$,
\begin{footnotesize}
\begin{align}
\label{new_example}
&Pr\left[\bigcap_{l=1}^L\left(R_l\in C_{v_l}\right)|N(t)=L\right]=\int_{C_{v_{1:L}}}p_{R_{1:L}|N(t)}\left(r_{1:L}|L\right)dr_{1:L},
\end{align}
\end{footnotesize}
or for our small example, for $v=(v_1,v_2,v_3)=(2,1,2)$,
\begin{small}
\begin{align}
&Pr\left[\left(R_1\in C_{2}\right) \cap \left(R_2\in C_{1}\right) \cap \left(R_3\in C_{2}\right) |N(t)=3 \right]\nonumber\\
&\ \ \ \ \ \ =\int_{C_{2}}\int_{C_{1}}\int_{C_2}p_{R_1,R_2,R_3|N(t)}\left(r_1,r_2,r_3|3\right)dr_3 dr_2 dr_1.
\end{align}
\end{small} 

In order to be able to compute the probability $Pr[S_1=z_1,S_2=z_k,\dots,S_K=z_k]$, we need to know all the events that lead to the photon count $S_1=z_1,S_2=z_k,\dots,S_K=z_k$, or we need to know the elements of $A_K^L$ that can lead to this photon count. For our example, as we have $K=2$ pixels we are interested to determine $Pr[S_1=z_1,S_2=z_k]$. If as above we know that $L=z_1+z_2=3$, i.e. that sum of the pixel counts is $3$, we need to determine the elements of $A_2^3$ that are consistent with this photon count.
 
We therefore, for a vector $v\in A^L_K$, denote by $\left\|v\right\|_{=k}, k=1,\cdots,K$, the number of the entries of $v$ which are equal to $k$. So as $v$ denotes a vector of pixel labels of $L$ photons, then  $\left\|v\right\|_{=k}, k=1,\cdots,K$ denotes the number of photons that have ended up in pixel $k$. For example, for $v=(1,1,2)\in A^3_2$, we have $\left\|v\right\|_{=1}=2, \left\|v\right\|_{=2}=1$, and $\left\|v\right\|_{=3}=0$.

This leads us to the following notation, which, for given photon numbers $z_1,\cdots,z_K\in\left\{0,1,\cdots\right\}$, and $\sum_{k=1}^Kz_k=L$, lists the elements in $A_K^L$ that will produce these photon counts in the pixels $C_1,C_2,\dots,C_K$, i.e.
 \begin{align}
A^L_K\left(z_1,\cdots,z_K\right):=\left\{v\in A^L_K|\left\|v\right\|_{=k}=z_k, k=1,\cdots,K\right\}.
\end{align}
So this set contains all vectors of pixel labels of $L$ photons that correspond to a configuration in which $z_k$ photons have landed in pixel $C_k,$ for $k=1,2,\ldots,K.$

Continuing our example we want to determine $A^3_2(1,2)$, i.e. the elements in $A^3_2$ that are such that there is one photon detected in pixel $C_1$ and $2$ photons in pixel $C_2$. Examining the set $A^3_2$ we see that $(1,2,2)$ is an element of $A^3_2(1,2)$ as one photon, the first detected, is captured in pixel $1$ and two photons, the second and third detected photon, are captured in pixel $2$. Proceeding in this manner we obtain, 
\begin{small} 
\begin{align*}
&A^3_2\left(1,2\right):=\left\{(1,2,2),(2,1,2),(2,2,1)\right\},\nonumber\\
&A^3_2\left(2,1\right):=\left\{(1,1,2),(1,2,1),(2,1,1)\right\},\nonumber\\
&A^3_2\left(0,3\right):=\left\{(2,2,2)\right\},\nonumber\\
&A^3_2\left(3,0\right):=\left\{(1,1,1)\right\}. 
\end{align*}
\end{small}
Moreover, for example, $A^3_2 (4,0)$ is an empty set, as by assumption we have $3$ photons detected  and therefore there is no event in $A^3_2$ that leads to a count of $4$ in pixel $1$.

With this notation we can now immediately see what one needs to do to, for example, compute the probability $Pr[S_1=2,S_2=1]$. The set of events that lead to $S_1=2$ and $S_2=1$ is given by $A^3_2(2,1)$, i.e. $\left \{(1,1,2),(1,2,1),(2,1,1)\right\}$. Therefore,
\begin{footnotesize}
\[
Pr[S_1=2,S_2=1]  
= Pr\left [
\left(\left(R_1\in C_{1}\right) \cap \left(R_2\in C_{1}\right) \cap \left(R_3\in C_{2}\right)\right) 
\right .
\]
\[
\cup
\left(\left(R_1\in C_{1}\right) \cap \left(R_2\in C_{2}\right) \cap \left(R_3\in C_{1}\right)\right)
\]
\[ 
\left .
\cup
\left(\left(R_1\in C_{2}\right) \cap \left(R_2\in C_{1}\right) \cap \left(R_3\in C_{1}\right)\right)  | N(t)=L
\right] Pr \left[ N(t)=L \right]
\]
\[
= Pr
\left [
 \bigcup _{v_{1:3} \in A^3_2(2,1)} 
 \left(\bigcap_{l=1}^3R_l\in C_{v_l}\right) | N(t)=3 
\right] Pr \left[ N(t)=3 \right]
\]
\[
= \sum_{v_{1:3} \in A^3_2(2,1)} Pr \left [
\left(\bigcap_{l=1}^3R_l\in C_{v_l}\right) | N(t)=3
\right] Pr \left[ N(t)=3 \right].
\]
\end{footnotesize}
The expressions on the right hand side can be computed using the expression above (Eq. (\ref{new_example})).

In the following theorem we summarize the above derivations and state the main result of this section. It provides the desired expression for the discrete multivariable probability distribution $Pr[S_1=z_1,S_2=z_2,\dots,S_K=z_K]$ that $z_1,z_2,\dots,z_K$ photons are detected in pixels $C_1,C_2,\dots,C_K$. Assuming that we have acquired $L$ photons in total on the detector, this expression is given by Eq. (\ref{practical_eq2}) below. It shows how we can compute the probability that the detected photons during one trajectory impact the various pixels. In this theorem we need to make a technical distinction between two different versions of the detector. If $\bar{\mathcal{C}^p}=\mathbb{R}^2$, the model is referred to as the \emph{infinite practical data model}. We will see later that the infinite practical model will lead to somewhat simplified expressions, for example of the likelihood function. Obviously, in reality no practical detector will be infinite. But this model can serve as a useful approximation in the cases where we can assume that all photons that impact the infinite detector plane that is spanned by the detector, are in fact captured by the detector itself.
 
 \vskip 2mm
 
\begin{theorem}
\label{main}
Let $t_0$ and $t, t_0<t$, be given. 1. In the infinite practical data model, we have, for $z_1,\cdots,z_K=0,1,\cdots$,
\begin{footnotesize}
\begin{align}
\label{practical_eq2}
&Pr\left[S_1=z_1,\cdots,S_K=z_K\right]=p_L\sum_{v_{1:L}\in A^L_K\left(z_{1:K}\right)}\int_{C_{v_{1:L}}}\Bigg[\int_{\mathbb{R}^{3\times L}}\nonumber\\
&\Bigg(\int_{\Delta^L}F\left(r_{1:L},x_{1:L},\tau_{1:L}\right)d\tau_{1:L}\Bigg)dx_{1:L}\Bigg]dr_{1:L},
\end{align}
\end{footnotesize}where $L=\sum_{k=1}^Kz_{k}$, and $F(.)$ is given by Eq. (\ref{F_theta}).

2. In the practical data model, we have
\begin{footnotesize}
\begin{align}
\label{practical_eq}
&Pr\left[S_1=z_1,\cdots,S_K=z_K\right]=\sum_{z=0}^{\infty}p_{L+z}\sum_{v_{1:L+z}\in A^{L+z}_{K+1}\left(z_{1:K},z\right)}\nonumber\\
&\int_{C_{v_{1:L+z}}}\Bigg[\int_{\mathbb{R}^{3\times (L+z)}}\Bigg(\int_{\Delta^{L+z}}F\left(r_{1:L+z},x_{1:L+z},\tau_{1:L+z}\right)d\tau_{1:L+z}\Bigg)\nonumber\\
&\times dx_{1:L+z}\Bigg]dr_{1:L+z}.
\end{align}
\end{footnotesize}

Here $C_{K+1}$ denotes the complement (in the detector plane) of the closure of the union of the pixel sets that form the detector, so it accounts for all the photons that have gone through the detector plane, but have missed the detector (Note that $C_{K+1}$ will be open, but we do not claim $C_{K+1}$ is connected).
\end{theorem}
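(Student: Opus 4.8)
The plan is to establish Part~1 (the infinite model) directly by conditioning on the total photon count $N(t)=L$ and then to deduce Part~2 by reducing the finite detector to an infinite model with one extra ``pixel'' that absorbs the photons missing the detector.

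\textbf{Part 1.} I would fix $z_1,\dots,z_K$ and set $L=\sum_{k=1}^K z_k$. Since $\bar{\mathcal{C}^p}=\mathbb{R}^2$ here, almost every detected impact location falls in exactly one of the open, disjoint pixels $C_1,\dots,C_K$: the shared boundaries have Lebesgue measure zero, and each $f_{x}$ is a probability density, so the event that any photon lands on a boundary is null. Consequently $\{N(t)=L\}$ and $\{\sum_k S_k=L\}$ agree up to a null set, and conditionally on $N(t)=L$ the event $\{S_1=z_1,\dots,S_K=z_K\}$ equals the union over all label vectors $v\in A_K^L(z_{1:K})$ of the events $\bigcap_{l=1}^L\{R_l\in C_{v_l}\}$. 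These are pairwise disjoint, because two distinct vectors $v,v'$ assign some photon $R_l$ to different (hence disjoint) pixels. Thus I would write
\begin{footnotesize}
\begin{align*}
&Pr\!\left[S_1=z_1,\dots,S_K=z_K\,\middle|\,N(t)=L\right]\\
&\quad=\sum_{v_{1:L}\in A_K^L(z_{1:K})}Pr\!\left[\bigcap_{l=1}^L(R_l\in C_{v_l})\,\middle|\,N(t)=L\right],
\end{align*}
\end{footnotesize}
apply Eq.~(\ref{new_example}) to each summand to turn it into an integral of $p_{R_{1:L}|N(t)}$ over $C_{v_{1:L}}$, multiply through by $Pr[N(t)=L]=p_L$, and substitute the expression for $p_{R_{1:L}|N(t)}$ furnished by Theorem~\ref{main0}. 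This produces Eq.~(\ref{practical_eq2}).

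\textbf{Part 2.} For the finite detector I would let $C_{K+1}$ be the complement of $\bar{\mathcal{C}^p}$, so that $C_1,\dots,C_K,C_{K+1}$ again cover $\mathbb{R}^2$ up to a null set and we are back in an infinite $(K{+}1)$-pixel model. Writing $S_{K+1}$ for the number of photons that miss the detector, the target event decomposes as the disjoint union over $z=0,1,2,\dots$ of $\{S_1=z_1,\dots,S_K=z_K,\,S_{K+1}=z\}$. Each of these is an infinite-model event with total count $L+z$, so Part~1, applied with the label set $A_{K+1}^{L+z}(z_{1:K},z)$, evaluates it; summing over $z$ gives Eq.~(\ref{practical_eq}).

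The main obstacle is the bookkeeping in Part~1, namely verifying that the events indexed by $A_K^L(z_{1:K})$ are genuinely disjoint and jointly exhaust $\{S_1=z_1,\dots,S_K=z_K\}$, together with the null-boundary argument that lets the open pixels stand in for a full partition of the plane. Once that combinatorial decomposition is in place, Part~2 is a routine application of the same reasoning to the enlarged pixel family; the only extra point is the interchange of the countable sum over $z$ with the integrals and the expectation, which is immediate by monotone convergence since every integrand is nonnegative.
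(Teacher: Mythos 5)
Your proposal is correct and follows essentially the same route as the paper: decompose $\{S_1=z_1,\dots,S_K=z_K\}$ conditionally on $N(t)=L$ into the disjoint union of the events indexed by $A_K^L(z_{1:K})$, sum the resulting integrals of $p_{R_{1:L}|N(t)}$ over the pixel products via Theorem~\ref{main0}, and for the finite detector introduce the complement pixel $C_{K+1}$ and sum over the number of missed photons. The only additions beyond the paper's argument are your explicit null-boundary justification and the monotone-convergence remark, both of which are harmless refinements of steps the paper leaves implicit.
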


\begin{proof}
See Appendix \ref{proof_main}.$\hfill\Box$ 
\end{proof}

The result states that in order to compute the discrete probability distribution for a set of photon counts we need to carry out a sum of integrals. The number of summands is given by the size $\left|A^L_K\left(z_1,\cdots,z_K\right)\right|$ of the set $A^L_K\left(z_1,\cdots,z_K\right)$ which is equal to $\frac{L!}{z_1!\cdots z_K!}$.

In the above theorem, we account for all photons that cross the detection plane, and therefore the time points of detection are Poisson distributed with intensity function $\Lambda(\psi),~\psi \in [t_0,t].$ Another approach would be to only use the time points at which a photon is detected on the detector. These time points still form a Poisson process, but with a location-sensitive intensity function. This approach leads to a very complicated analysis, which is why we take the alternative approach as in Theorem \ref{main}.

\section{Linear stochastic systems}
\label{linear sde}

In general, the motion of an object in cellular environments is subject to different types of forces, e.g., deterministic forces due to the environment and random forces due to random collisions with other objects \cite{Schuss,kervrann}. The 3D random variable $X(\tau)$ denotes the location of the object at time $\tau\geq t_0$. Then, the motion of the object is assumed to be modeled through a general state space system with state $\tilde{X}(\tau)\in\mathbb{R}^k, \tau\geq t_0$, as, for $\tau_0:=t_0\leq\tau_1<\cdots<\tau_{l+1}<\cdots$,  
\begin{align}
\label{sde1}
\tilde{X}(\tau_{l+1})=\tilde{\phi}(\tau_l,\tau_{l+1})\tilde{X}(\tau_l)+\tilde{W}(\tau_l,\tau_{l+1}),
\end{align}where we assume that there exists a matrix $H\in\mathbb{R}^{3\times k}$ such that $X(\tau)=H\tilde{X}(\tau), \tau\geq t_0$, $\tilde{\phi}(\tau_l,\tau_{l+1})\in\mathbb{R}^{k\times k}$ is a state transition matrix, and $\left\{\tilde{W}(\tau_l,\tau_{l+1})\right.$, $\left. l=1,2,\cdots\right\}$ is a sequence of $k$-dimensional random variables. We also assume that the initial state $\tilde{X}(t_0)$ is independent of $\tilde{W}$ and its probability density function is given by $p_{\tilde{X}(t_0)}$. The framework we employ here is very general in that, depending on the specific problem we are considering, the system matrices are assumed to be known, unknown or partially known. The unknown elements of the system matrices would form part of the parameter vector that is to be estimated.

The general system of discrete evolution equations described by Eq. (\ref{sde1}) can arise, for example, from stochastic differential equations \cite{fokker}. In particular, in many biological applications, solutions of linear stochastic differential equations are good fits to experimental single-molecule trajectories \cite{fokker}. As an example, we assume that the motion of the object of interest, e.g., a single molecule, is described by the following linear vector stochastic differential equation \cite{calderon3}
\begin{align}
\label{lineareq}
dX(\tau)=\left(V+F(\tau)X(\tau)\right)d\tau+G(\tau)dB(\tau),\quad\quad\tau\geq t_0,
\end{align}where the 3D random process $X(\tau)$ describes the location of the object at time $\tau\geq t_0$, $F\in\mathbb{R}^{3\times 3}$ and $G\in\mathbb{R}^{3\times r}$ are continuous matrix time-functions related to the first order drift and diffusion coefficients, respectively, $V\in\mathbb{R}^3$ is the zero order drift coefficient, and $\left\{B(\tau)\in\mathbb{R}^r,\tau\geq t_0\right\}$ is an $r$-vector Brownian motion (Wiener) process with $E\left\{dB(\tau)dB(\tau)^T\right\}=I_{r\times r}, \tau\geq t_0$, where $I_{r\times r}$ is the $r\times r$ identity matrix \cite{calderon1,calderon2,calderon3}. Then, the solution of Eq. (\ref{lineareq}) at discrete time points $\tau_0:=t_0\leq\tau_1<\cdots<\tau_{l+1}<\cdots$ is given by \cite{jazwinski}
\begin{align}
\label{sdeConstant}
X(\tau_{l+1})=\phi(\tau_l,\tau_{l+1})X(\tau_l)+a(\tau_l,\tau_{l+1})+W(\tau_l,\tau_{l+1}),
\end{align}where the continuous matrix time-function $\phi\in\mathbb{R}^{3\times 3}$ is given by
\begin{align*}
&\frac{d\phi(t,\tau)}{dt}=F(t)\phi(t,\tau),\quad \phi(\tau,\tau)=I_{3\times 3},\quad \mbox{for all}\ t,\tau\geq t_0,\\
&\phi(t,\tau)\phi(\tau,\psi)=\phi(t,\psi),\quad \mbox{for all}\ t, \tau, \psi\geq t_0,
\end{align*}
and the vector $a(\tau_l,\tau_{l+1})\in\mathbb{R}^{3\times 1}$ is given by
\begin{align*} 
a(\tau_l,\tau_{l+1}):=V\int_{\tau_l}^{\tau_{l+1}}\phi(\tau,\tau_{l+1})d\tau.
\end{align*}
Also, in this case, 
\begin{small}
\begin{align*}
\left\{W(\tau_l,\tau_{l+1}):=\int_{\tau_l}^{\tau_{l+1}}\phi(\tau,\tau_{l+1})G(\tau)dB(\tau), l=1,2,\cdots\right\}
\end{align*}
\end{small}is a zero mean white Gaussian sequence with covariance $Q(\tau_l,\tau_{l+1})\in\mathbb{R}^{3\times 3}$ given by
\begin{align*}
Q(\tau_l,\tau_{l+1})=\int_{\tau_l}^{\tau_{l+1}}\phi(\tau,\tau_{l+1})G(\tau)G^T(\tau)\phi^T(\tau,\tau_{l+1})d\tau.
\end{align*}
By letting $X(\tau)=H\tilde{X}(\tau)=I_{3\times 3}\tilde{X}(\tau)=\tilde{X}(\tau),\tau\geq t_0$, and $\phi(\tau_l,\tau_{l+1})=\tilde{\phi}(\tau_l,\tau_{l+1}),$ we obtain expressions of the form of Eq. (\ref{sde1}), where we assume that
\begin{align*} 
\left\{\tilde{W}(\tau_l,\tau_{l+1})=a(\tau_l,\tau_{l+1})+W(\tau_l,\tau_{l+1}), l=1,2,\cdots\right\}
\end{align*}
is a white Gaussian sequence with mean $a(\tau_l,\tau_{l+1})$ and covariance $Q(\tau_l,\tau_{l+1})$.

\section{Maximum likelihood estimation}
\label{mle}

In this section, we provide a general framework to calculate the maximum likelihood estimates of the parameters of interest for both fundamental and practical data models. In general, these parameters can include the ones that describe the motion of the object, such as drift and diffusion coefficients, or the ones related to the image formation of the object on the detector, such as the intensity function. In the following, we briefly explain the basis of the maximum likelihood estimation.

\subsection{Maximum likelihood estimation for fundamental data model}

Let $\Theta$ denote the parameter space that is an open subset of $\mathbb{R}^{1\times n}$\footnote[1]{This assumption is made for ease of exposition.}. The maximum likelihood estimate $\hat{\theta}_{mle}$ of $\theta\in\Theta$ for the fundamental data model is given by
\begin{align*}
\hat{\theta}_{mle}=\argmin_{\theta\in\Theta}\Big(-\log\mathcal{L}_f(\theta|r_{1:L})\Big),
\end{align*}where $r_1,\cdots,r_L\in\mathbb{R}^2$ denote the acquired data and $\mathcal{L}_f(\theta|r_{1:L})=p^{\theta}_{R_{1:L}|N(t)}\left(r_{1:L}|L\right)$ denotes the likelihood function for the fundamental data model given by Eq. (\ref{likelihood1}).

In the rest of this paper, we only focus on the estimation of the parameters of the motion model, such as drift coefficient, diffusion coefficient and initial location of the molecule, i.e., we assume that $\Lambda$ and $f_x$ are independent of $\theta$. All parameters of the trajectory of the molecule have been encapsulated in the parameter vector $\theta$ and our approach does not have significant restrictions on which parameters can be included in this parameter vector. We also denote the dependence of the variable/function on the parameter vector $\theta$, by adding $\theta$ to its symbol as a superscript or subscript. For example, $p_{R_{1:L}|N(t)}$ and $F$ in Eq. (\ref{likelihood1}) are denoted by $p^{\theta}_{R_{1:L}|N(t)}$ and $F_{\theta}$, respectively.

\subsection{Maximum likelihood estimation for practical data model}

The maximum likelihood estimate $\hat{\theta}_{mle}$ of $\theta\in\Theta$ for the practical data model is given by
\begin{align*}
\hat{\theta}_{mle}=\argmin_{\theta\in\Theta}\Big(-\log\mathcal{L}_p(\theta|z_1,\cdots,z_K)\Big),
\end{align*}where $\left\{z_1,\cdots,z_K\right\}, z_1,\cdots,z_K=0,1,\cdots; L=\sum_{k=1}^Kz_k$, denotes an image with $K$ pixels and $\mathcal{L}_p$ denotes the likelihood function for the infinite practical data model given by, according to Theorem \ref{main},
\begin{small}
\begin{align}
\label{prac2}
&\mathcal{L}_p(\theta|z_1,\cdots,z_K)\nonumber\\
&\ \ \ \ =Pr^{\theta}\left[S_1=z_1,\cdots,S_K=z_K\right]\nonumber\\
&\ \ \ \ =p_L Pr^{\theta}\left[S_1=z_1,\cdots,S_K=z_K|N(t)=L\right]\nonumber\\
&\ \ \ \ =p_L\int_{\Delta^L}\Bigg[\sum_{v_{1:L}\in A^L_K\left(z_{1:K}\right)}\int_{\mathbb{R}^{3\times L}}\prod_{l=1}^LI_{C_{v_l}}(x_l)\nonumber\\
&\ \ \ \ \ \ \ \ \ \ \times p^{\theta}_{X(\tau_1),\cdots,X(\tau_L)}\left(x_{1:L}\right)dx_{1:L}\Bigg]p_{T_{1:L}|N(t)}\left(\tau_{1:L}|L\right)d\tau_{1:L}, 
\end{align}
\end{small}where
\begin{align}
I_{C_{v_l}}(x_l):=\int_{C_{v_l}}f_{x_l}(r)dr,\quad l=1,\cdots,L,
\end{align}is the probability that, given that the systems state is $x_l$ at time $\tau_l$, the photon that arrives as time $\tau_l$ hits $C_{v_l}$. For the practical data model, we have
\begin{footnotesize}
\begin{align}
\label{prac1}
&\mathcal{L}_p(\theta|z_1,\cdots,z_K)\nonumber\\
&\ \ \ \ =\sum_{z=0}^{\infty}p_{L+z}\int_{\Delta^{L+z}}\Bigg[\sum_{v_{1:L+z}\in A^{L+z}_{K+1}\left(z_{1:K},z\right)}\int_{\mathbb{R}^{3\times(L+z)}}\prod_{l=1}^{L+z}I_{C_{v_l}}(x_l)\nonumber\\
&\ \ \ \ \ \ \ \ \ \times p^{\theta}_{X(\tau_1),\cdots,X(\tau_{L+z})}\left(x_{1:L+z}\right)dx_{1:L+z}\Bigg]\nonumber\\
&\ \ \ \ \ \ \ \ \ \times p_{T_{1:L+z}|N(t)}\left(\tau_{1:L+z}|L+z\right)d\tau_{1:L+z}.
\end{align}
\end{footnotesize}

In general, computing the integrals of the likelihood function is not a trivial task. Here, based on the Monte Carlo approach provided in \cite{automatica,ref_monte,singh}, we develop an algorithm to approximate these integrals. The basis of our algorithm is the law of large numbers, which can be stated as follows.

Let $\left\{\tau^n_1,\cdots,\tau^n_L\right\}_{n=1}^N$ be $N$ sequences of $L$  time points in the interval $\left[t_0,t\right]$, each are arrivals of a Poisson process with intensity function $\Lambda(\psi)$ and such that precisely $L$ time points are in the exposure time interval $\left[t_0,t\right]$. Let $X(\tau^n_1),\cdots,X(\tau^n_L), n=1,\cdots,N$, be 3D random variables that describe the locations of the object at time points $\tau_0:=t_0\leq\tau^n_1<\cdots<\tau^n_L\leq t$. Let $p_{X(\tau^n_1),\cdots,X(\tau^n_L)}$ be the joint distribution of $X(\tau^n_1),\cdots,X(\tau^n_L)$. Then, the likelihood function (Eq. (\ref{prac2})) can be approximated as
\begin{small}
\begin{align}
\label{approximation_1}
&\mathcal{L}_p(\theta|z_1,\cdots,z_K)\nonumber\\
&\ \ \ \ \ \approx p_L\frac{1}{N}\sum_{n=1}^N\sum_{v_{1:L}\in A^L_K\left(z_{1:K}\right)}\left[\frac{1}{M_{c}}\sum_{m=1}^{M_{c}}\left(\prod_{l=1}^LI_{C_{v_l}}(x^{m,n}_l)\right)\right],
\end{align}
\end{small}where according to the law of large numbers
\begin{small}
\begin{align}
\label{approximation_2}
&\lim_{M_{c}\rightarrow\infty}\frac{1}{M_{c}}\sum_{m=1}^{M_{c}}\left(\prod_{l=1}^LI_{C_{v_l}}(x^{m,n}_l)\right)\nonumber\\
&\ \ \ \ \ =E\left\{\prod_{l=1}^LI_{C_{v_l}}(X(\tau_l))\right\}\nonumber\\
&\ \ \ \ \ =\int_{\mathbb{R}^{2\times L}}\left(\prod_{l=1}^LI_{C_{v_l}}(x_l)\right)p_{X(\tau^n_1),\cdots,X(\tau^n_L)}\left(x_{1:L}\right)dx_{1:L},
\end{align}
\end{small}in which $\left\{x^{m,n}:=\left(x^{m,n}_1,\cdots,x^{m,n}_L\right)\right\}_{m=1}^{M_{c}}, x^{m,n}_l\in\mathbb{R}^3, l=1,\cdots,L, m=1,\cdots,M_{c}$, is a sequence of ${M_{c}}$ independent and identically distributed trajectories drawn from the distribution $p_{X(\tau^n_1),\cdots,X(\tau^n_L)}$. As the state process $\left\{\tilde{X}(\tau)\right\}_{\tau\geq t_0}$ is a Markov process (see Eq. (\ref{sde1})) we carry out the Monte Carlo simulations for this state process and then obtain the simulations for $\left\{X(\tau)\right\}_{\tau\geq t_0}$ as $X(\tau)=H\tilde{X}(\tau), \tau\geq t_0$. As $\{\tilde{X}(\tau)\}_{\tau\geq t_0}$ is Markov, the joint probability density function $p_{\tilde{X}(\tau_1),\cdots,\tilde{X}(\tau_L)}$ of $\tilde{X}(\tau_1),\cdots,\tilde{X}(\tau_L)$ is given by
\begin{small}
\begin{align}
p_{\tilde{X}(\tau_1),\cdots,\tilde{X}(\tau_L)}\left(\tilde{x}_{1:L}\right)=p_{\tilde{X}(\tau_1)}\left(\tilde{x}_1\right)\prod_{l=2}^Lp_{\tilde{X}(\tau_l)|\tilde{X}(\tau_{l-1})}\left(\tilde{x}_l|\tilde{x}_{l-1}\right).
\end{align}
\end{small}
We draw $M_{c}$ trajectories $x^{m,n}, m=1,\cdots,M_{c}, n=1,\cdots,N$, through the following Monte Carlo algorithm \cite{automatica,ref_monte,singh}:

\begin{algorithm}[Monte Carlo method]
\label{algorithm1}
\textbf{Step 0.} Suppose we have observed the arrival of $L$ photons on the detector during the exposure time interval. Let $\left\{\tau^n_1,\cdots,\tau^n_L\right\}_{n=1}^N$ be $N$ sequences of $L$ time points, each are arrival times of a Poisson process with intensity function $\Lambda(\psi)$ and with precisely $L$ time points in the exposure time interval $\left[t_0,t\right]$.

\textbf{Step 1.} For each $n$, draw independent and identically distributed (i.i.d.) samples $\left\{\tilde{x}_1^{m,n}\right\}_{m=1}^{M_{c}}$ according to $p_{\tilde{X}(\tau^n_1)}(\tilde{x}), \tilde{x}\in\mathbb{R}^k$, i.e., $\tilde{x}_1^{m,n}\sim p_{\tilde{X}(\tau^n_1)}(\tilde{x}), m=1,\cdots,M_{c}$.

\textbf{Step 2.} For each $n$, draw i.i.d. samples $\left\{\tilde{x}_2^{m,n}\right\}_{m=1}^{M_{c}}$ according to $p_{\tilde{X}(\tau^n_2)|\tilde{X}(\tau^n_1)}\Big(\tilde{x}|\tilde{x}_1^{m,n}\Big), \tilde{x}\in\mathbb{R}^k$, i.e., $\tilde{x}_2^{m,n}\sim p_{\tilde{X}(\tau^n_2)|\tilde{X}(\tau^n_1)}\Big(\tilde{x}|\tilde{x}^{m,n}_1\Big), m=1,\cdots,M_c$.

\ \ \ \vdots

\textbf{Step $L$.} For each $n$, draw i.i.d. samples $\left\{\tilde{x}_L^{m,n}\right\}_{m=1}^{M_c}$ according to $p_{\tilde{X}(\tau^n_L)|\tilde{X}(\tau^n_{L-1})}\Big(\tilde{x}|\tilde{x}_{L-1}^{m,n}\Big), \tilde{x}\in\mathbb{R}^k$, i.e., $\tilde{x}_L^{m,n}\sim p_{\tilde{X}(\tau^n_L)|\tilde{X}(\tau^n_{L-1})}\Big(\tilde{x}|\tilde{x}^{m,n}_{L-1}\Big), m=1,\cdots,M_{c}$.

\textbf{Step $L+1$.} Repeat Steps $0$ to $L$, $N$ times and  approximate the likelihood function by Eqs. (\ref{approximation_1}) and (\ref{approximation_2}) by setting $x^{m,n}_l:=H\tilde{x}^{m,n}_l, m=1,\cdots,M_c, l=1,\cdots,L, n=1,\cdots,N$.
\end{algorithm} 
 
Note that with the same technique we can approximate  Eq. (\ref{prac1}). The infinite sum in Eq. (\ref{prac1}) can be approximated by truncating the summation. As the denominator of a term with index $z$ is $(L+z)!$ the summation can be expected to converge quickly.

We further assess the performance of the above algorithm in the computation of the likelihood function for the cases that we have a small number of photons detected in simulated pixelated images in Examples \ref{example1}, \ref{example2}, \ref{example3} and \ref{example4}. In Example \ref{example1}, we investigate the case that we have only one photon, and evaluate the convergence of the Monte Carlo approach with different numbers of samples. Example \ref{example2} is focused on the computation of maximum likelihood estimates of the unknown parameters of the motion model, and the evaluation of the mean and standard deviation of the estimates, for the cases that the number of photons detected in the simulated pixelated images is equal to four. In Example \ref{example3}, for data sets of simulated images of a stationary molecule, we show that the standard deviations of location estimates match well with the CRLB. Finally, in Example \ref{example4}, we examine the effect of pixel size of the detector on the performance of the estimation method, in terms of standard deviation, a fact that has not been considered in most of available methods. 

\begin{figure}[htbp]
\centering\includegraphics[width=0.5\textwidth]{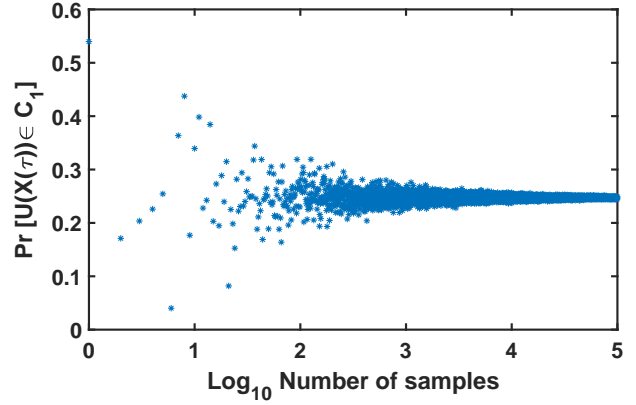}
\DeclareGraphicsExtensions{.eps} \caption{Convergence of the Monte Carlo method. The probabilities $Pr\left[U((X(\tau))\in C_1\right]=0.24735$ for different number $M_{c}$ of Monte Carlo samples, which ranges from 1 to 100,000, where $X(\tau)$ is a two-dimensional single molecule trajectory (Eq. (\ref{lineareq})), are shown in which the time point $\tau=0.01$ ms is fixed, with the first order drift coefficient $F = -10$/s and the diffusion coefficient $D = 1$ $\mu^2$/s ($G:=\sqrt{2D}$). Also, assume that the initial location of the molecule is $x(t_0) = (2.4, 2.4)^T$ $\mu$m. Detected locations of the photons emitted from the molecule in the image space are simulated using a zero-mean Gaussian model with covariance matrix $\Sigma=0.01I_{2\times 2}$ $\mu^2$m. A $60\times 60$ pixelated detector with square pixels of width $W=16$ $\mu$m is used to acquire the pixelated image of the molecule trajectory.}
\label{monte}
\end{figure}

\begin{example}
\label{example1}
Assume that we have a typical two-dimensional single molecule trajectory $X(\tau)$ in the object space (Eq. (\ref{lineareq})), where the time point $\tau=0.01$ ms is fixed, with the first order drift coefficient 
$F = -10$/s 
(we assume that there is no zero order drift, i.e., $V=0$) and the diffusion coefficient $D = 1$ $\mu^2$/s ($G:=\sqrt{2D}$). Also, we assume that the initial location of the molecule is $x(t_0) = (2.4, 2.4)^T$ $\mu$m. In the fundamental data model, detected locations of the photons emitted from the molecule in the image space are simulated using a zero-mean Gaussian profile with covariance matrix $\Sigma=0.01I_{2\times 2}$ $\mu^2$m. In the practical data model, a $60\times 60$ pixelated detector with square pixels of width $W=16$ $\mu$m is used to acquire the pixelated image of the molecule trajectory. Assume that the photon emitted from the object hits the pixel $C_1$ centered at $\left(c^1_x,c^1_y\right)=(230.75, 237.25)^T$ $\mu$m on the image plane. Then, using Algorithm \ref{algorithm1}, we calculate the probability that this event takes place as
\begin{small}
\begin{align*}
Pr\left[U((X(\tau))\in C_1\right]&=\int_{\mathbb{R}^2}I_{C_1}(x)p_{X(\tau)}(x)dx\nonumber\\
&\approx\frac{1}{M_{c}}\sum_{m=1}^{M_{c}}I_{C_1}\left(x^m\right),
\end{align*}
\end{small}where, for an invertible magnification matrix $M\in\mathbb{R}^{2\times 2}$ and $x\in\mathbb{R}^2$,
\begin{footnotesize}
\begin{align}
I_{C_1}(x)&=\frac{1}{\left|\det(M)\right|}\int_{C_1}q\left(M^{-1}r-x\right)dr\nonumber\\
&=\frac{1}{\left|\det(M)\right|}\int_{c^1_x-\frac{W}{2}}^{c^1_x+\frac{W}{2}}\int_{c^1_y-\frac{W}{2}}^{c^1_y+\frac{W}{2}}q\left(M^{-1}\left(r_x,r_y\right)-x\right)dr_ydr_x,
\end{align}
\end{footnotesize}and $\left\{x^m\right\}_{m=1}^{M_{c}}, x^m\in\mathbb{R}^2, m=1,\cdots,M_{c}$, is a sequence of independent and identically distributed samples drawn from the distribution $p_{X(\tau)}$ using Algorithm \ref{algorithm1}. In Fig. \ref{monte}, we have shown the probabilities $Pr\left[U((X(\tau))\in C_1\right]$ computed for different number $M_{c}$ of Monte Carlo samples. As can be seen in Fig. \ref{monte_2}, the standard deviation of the probabilities decreases by increasing the number of samples, where the number of samples ranges between (a) 1 to 25,000, (b) 25,001 to 50,000, (c) 50,001 to 75,000, and (d) 75,001 to 100,000 ((a) first, (b) second, (c) third, and (d) fourth quarters of the data. This suggests the convergence of these probabilities.
\end{example}

\begin{figure}[htbp]
\centering\includegraphics[width=0.5\textwidth]{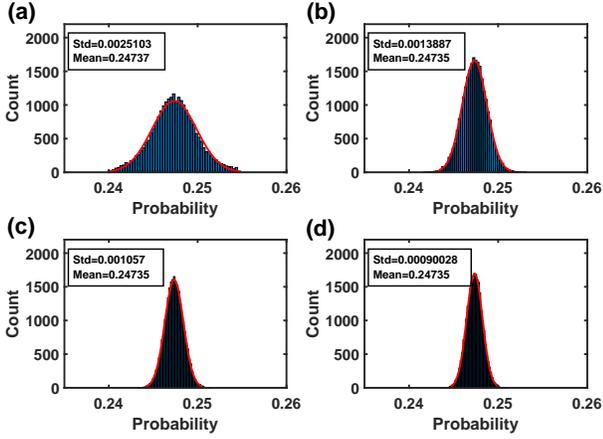}
\DeclareGraphicsExtensions{.eps} \caption{Histograms, means and standard deviations of the probabilities computed using the Monte Carlo method with different number of samples. Gaussian models fitted to the histograms of the probabilities computed using the Monte Carlo method, where the number of samples ranges between (a) 1 to 25,000, (b) 25,001 to 50,000, (c) 50,001 to 75,000, and (d) 75,001 to 100,000 ((a) first, (b) second, (c) third, and (d) fourth quarters of the data shown in Fig. \ref{monte}).}
\label{monte_2}
\end{figure}

\begin{example}
\label{example2}
We next examine the performance of our proposed parameter estimation method. For this purpose, we simulated 100 pixelated images of single molecule trajectories. These trajectories were simulated using Eq. (\ref{lineareq}) with four time points, where the time points were drawn from a Poisson process in the exposure time interval $[0,20]$ ms, and the first order drift coefficient $F=-10$/s and the diffusion coefficient $D=1.5$ $\mu\mbox{m}^2$/s ($G:=\sqrt{2D}$). Also, we assumed that the initial location of the molecule was $(2.3,2.3)^T$ $\mu$m. The locations of the photons emitted from the molecule trajectories, in the image space, were simulated with the Gaussian measurement noise (Eq. (\ref{gaussianimagefunction})) and $\sigma=0.1$ $\mu$m. We assumed that these photons were detected using a pixelated detector of pixel size and image size of $6.5 \times 6.5$ $\mu$m and $60\times 60$ pixels, respectively. For example, in the $50^{th}$ image, we have four photons which are detected in the pixels centered at $(217.75,230.75)^T, (172.25,217.75)^T, (146.25,204.75)^T$ and $(139.75,198.25)^T$ $\mu$m positions on the image plane, denoted by $C_1, C_2, C_3$ and $C_4$. Then, the summation of the likelihood function (Eq. \ref{approximation_1}) is performed over $A^4_4(1,1,1,1)=\left\{(1,2,3,4),(1,2,4,3),\cdots,(4,3,2,1)\right\}$ (the size of $A^4_4(1,1,1,1)$ in this case is equal to $4! = 24$).

We then estimated all parameters of the trajectories, e.g., initial location of the molecule, drift and diffusion coefficients, together using Algorithm \ref{algorithm1}, where the number of spatial $M_{c}$ and temporal $N$ Monte Carlo samples equal to 1000 and 100, respectively. The errors (estimate - true value) of the estimation are shown in Figs. \ref{locestimate} and \ref{driftdiffestimate}. As can be seen in these figures, the spreads of the errors are around zero and there is no systematic bias associated with the estimates.

\begin{figure}[htbp]
\centering\includegraphics[width=0.5\textwidth]{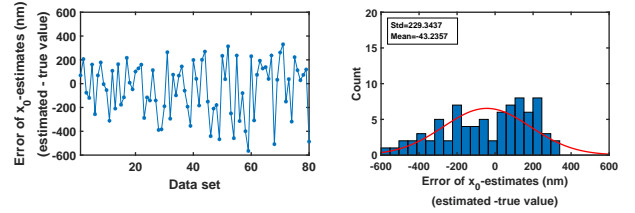}
\DeclareGraphicsExtensions{.eps} \caption[Analysis of the error of initial location estimates from pixelated images of single molecule trajectories for the Gaussian measurement noise case.] {Analysis of the error of initial location estimates from pixelated image data sets of single molecule trajectories for the Gaussian measurement noise case. Differences between the estimates of the initial $x_0$- and $y_0$-location of the molecule and their true values from the images of the molecule trajectories simulated using Eqs. (\ref{lineareq}) with four time points, where the time points are drawn from a Poisson process, and the first order drift coefficient $F=-10$/s and the diffusion coefficient $D=1.5$ $\mu\mbox{m}^2$/s ($G:=\sqrt{2D}$). The initial location of the molecule is $x(t_0)=(2.3,2.3)^T$ $\mu$m. The locations of the photons emitted from the molecule trajectories, in the image space, are simulated with the Gaussian measurement noise (Eq. (\ref{gaussianimagefunction})) and $\sigma=0.1$ $\mu$m. These photons are detected using a pixelated detector of pixel size and image size of $6.5 \times 6.5$ $\mu$m and $60\times 60$ pixels, respectively.}
\label{locestimate}
\end{figure}
\begin{figure}[htbp]
\centering\includegraphics[width=0.5\textwidth]{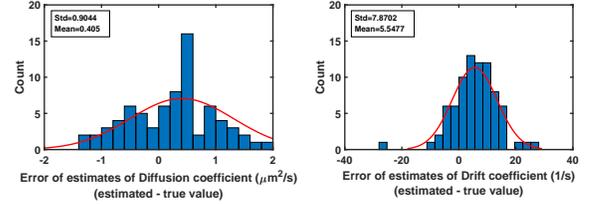}
\DeclareGraphicsExtensions{.eps} \caption[Analysis of the error of diffusion coefficient and drift coefficient estimates from pixelated images of single molecule trajectories for the Gaussian measurement noise case.] {Analysis of the error of diffusion coefficient and drift coefficient estimates from pixelated image data sets of single molecule trajectories for the Gaussian measurement noise case. Differences between the diffusion (first order drift) coefficient estimates and the true diffusion (first order drift) coefficient value for data sets of Fig. \ref{locestimate}.}
\label{driftdiffestimate}
\end{figure}

We also applied the algorithm to the pixelated images of single molecule trajectories simulated using an Airy point spread functions with $\alpha=\frac{2\pi n_a}{\lambda}=13.23$, which corresponds to a Gaussian profile with $\sigma=0.1$ $\mu$m. The parameters of the molecule trajectories were the same as the parameters of the data set of Fig. \ref{locestimate}. As can be seen in Figs. \ref{locestimate_airy} and \ref{driftdiffestimate_airy}, we have obtained similar results as in the Gaussian case.
\end{example}

\begin{figure}[htbp]
\centering\includegraphics[width=0.5\textwidth]{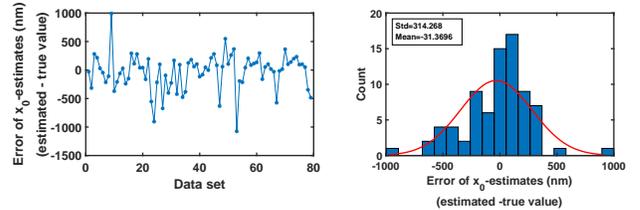}
\DeclareGraphicsExtensions{.eps} \caption[Analysis of the error of initial location estimates from pixelated images of single molecule trajectories for the Airy measurement noise case.] {Analysis of the error of initial location estimates from pixelated image data sets of single molecule trajectories for the Airy measurement noise case. Differences between the estimates of the initial $x_0$- and $y_0$-location of the molecule and their true values from the images of the molecule trajectories simulated using the parameters of the data set of Fig. \ref{locestimate}. The locations of the photons emitted from the molecule trajectories, in the image space, are simulated using an Airy model with $\alpha=\frac{2\pi n_a}{\lambda}=13.23$. These photons are detected using a pixelated detector of pixel size and image size of $6.5 \times 6.5$ $\mu$m and $60\times 60$ pixels, respectively.}
\label{locestimate_airy}
\end{figure}

\begin{figure}[htbp]
\centering\includegraphics[width=0.5\textwidth]{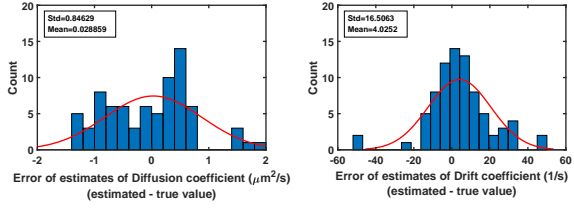}
\DeclareGraphicsExtensions{.eps} \caption[Analysis of the error of diffusion coefficient and drift coefficient estimates from pixelated images of single molecule trajectories for the Airy measurement noise case.] {Analysis of the error of diffusion coefficient and drift coefficient estimates from pixelated image data sets of single molecule trajectories for the Airy measurement noise case. Differences between the diffusion (first order drift) coefficient estimates and the true diffusion (first order drift) coefficient value for data sets of Fig. \ref{locestimate_airy}.}
\label{driftdiffestimate_airy}
\end{figure}

\begin{example}
\label{example3}
We further evaluate the performance of the proposed method in terms of the standard deviation of the estimates. In order to do this, we simulated the pixelated images of a stationary object using a pixelated detector of pixel size and image size of $6.5 \times 6.5$ $\mu$m and $60\times 60$ pixels, respectively, assuming that the number of photons detected by the detector is equal to three. The locations of the photons in the image space were simulated with the Gaussian measurement noise (Eq. (\ref{gaussianimagefunction})) and $\sigma=0.1$ $\mu$m. We then estimated the location of the molecule using Algorithm \ref{algorithm1}, where the number of Monte Carlo samples is equal to 10000. The errors of the location estimates are shown in Fig. \ref{loc_stationary}. As before, the errors are spreading around zero and no systematic bias can be seen. We also calculated the standard deviations of the estimates. These standard deviations, which are computed as 57.4 nm and 59.6 nm for the $x_0$- and $y_0$-locations of the molecule, respectively, are close to the localization accuracy, i.e., the positive definite square root of the CRLB, which is given as 58.37 nm for both $x$- and $y$-directions, reported in \cite{2004,anish,milad1,milad2}.
\end{example}

\begin{example}
\label{example4}
In this example, we investigate the effect of pixel size of the detector on the standard deviation of drift coefficient estimates. For this purpose, $100\times 100$ pixelated images were simulated using the parameters provided in Example \ref{example1}, assuming that four photons are detected in each image, and the pixel size ranges from $4 \times 4$ to $11.5 \times 11.5$ $\mu$m (for each pixel size, we have 50 images). As shown in the Figure \ref{pixel_size}, the standard deviation of the estimates gets worse as the pixel size of the detector increase.
\end{example}

\begin{figure}[htbp]
\centering\includegraphics[width=0.5\textwidth]{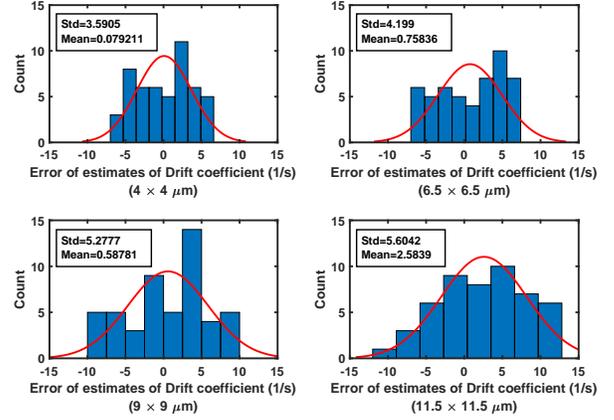}
\DeclareGraphicsExtensions{.eps} \caption[Analysis of the error of drift coefficient estimates from pixelated images with different pixel sizes.] {Analysis of the error of drift coefficient estimates from pixelated images with different pixel sizes. Analysis of the error of drift coefficient estimates from $100\times 100$ pixelated image data sets of single molecule trajectories for the Gaussian measurement noise case simulated using the parameters of Fig. \ref{locestimate}, assuming that four photons are detected in each image, where the pixel size ranges from $4 \times 4$ to $11.5 \times 11.5$ $\mu$m.}
\label{pixel_size}
\end{figure}

Here, we only consider a small number of photons, since, in general, the computation of the likelihood function (Eq. (\ref{practical_eq2})) is expensive. It is mostly because of the large number of members of the set $A^L_K\left(z_1,\cdots,z_K\right)$, which is equal to $\frac{L!}{z_1!\cdots z_K!}$, when $L$ increases. For example, in case of having a $32\times 32$-pixels detector with $L=1000$ and $K=1024, z_1=500,z_2=\cdots=z_{501}=1, z_{502}=\cdots=z_{1024}=0$, we have $\left|A^L_K\left(z_1,\cdots,z_K\right)\right|=\frac{1000!}{500!}=1000\times\cdots \times 501$, which is an extremely large number. To arrive at an estimator that can be practically computed, further research is needed for the cases in which the cardinality of the set $A_K^L\left(z_1,\cdots,z_K\right)$ is too large.
 
\begin{figure}[htbp]
\centering\includegraphics[width=0.5\textwidth]{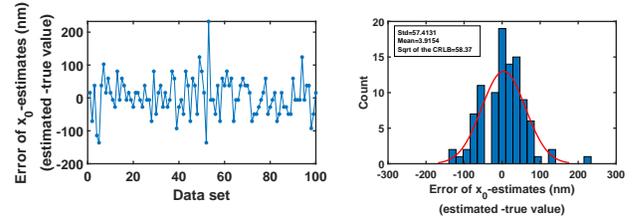}
\DeclareGraphicsExtensions{.eps} \caption[Analysis of the error of location estimates from pixelated images of a stationary molecule for the Gaussian measurement noise case.] {Analysis of the error of location estimates from pixelated image data sets of a stationary molecule for the Gaussian measurement noise case. Differences between the estimates of the initial $x_0$- and $y_0$-location of the molecule and their true values from the simulated images of a stationary molecule using a pixelated detector of pixel size and image size of $6.5 \times 6.5$ $\mu$m and $60\times 60$ pixels, respectively, assuming that three photons are detected by the detector. The locations of the photons in the image space are simulated with the Gaussian measurement noise (Eq. (\ref{gaussianimagefunction})) and $\sigma=0.1$ $\mu$m.}
\label{loc_stationary}
\end{figure}

\section{Fisher information matrix}
\label{crlbsection}

The Fisher information matrix was introduced in \cite{2004, 2006} for the analysis of microscopy image and data analysis problems, in particular for single molecule microscopy. It has since been used extensively for the evaluation of image analysis algorithms \cite{ed1, ed2}, experiment design \cite{mum2}, design of novel point spread functions \cite{Moerner_fish1, tahmasbi, Moerner_fish2} etc. The purpose of this section is to develop a framework for the computation of the Fisher information for the experimental setting considered here.

According to a well-known result in information theory, known as the Cram\'er-Rao inequality, the covariance matrix of any unbiased estimator $\hat{\theta}$ of an unknown parameter vector $\theta$ is bounded from below by the inverse of the Fisher information matrix $I(\theta)$, i.e., $\mbox{Cov}(\hat{\theta})\geq I^{-1}(\theta)$, assuming that $I(\theta)$ is invertible. \cite{jerry}. Then, the smallest standard deviation of the estimates that can be obtained, which is independent of the estimation method, only depends on the statistical model of the data, and is given by the positive definite square root of the inverse of the Fisher information matrix, referred to as the Cram\'er-Rao lower bound (CRLB). Large sample theory tells us that for a number of large repeats the maximum likelihood estimator has a variance that is given by the inverse of the Fisher information matrix \cite{jerry_fisher1,jerry_fisher2}. The usefulness of the CRLB also derives from the fact that in many practical circumstances for a maximum likelihood estimator it has been found that the variance of the estimator is well approximated by the inverse of the Fisher information matrix. In the following, for the fundamental and practical data models introduced in Sections \ref{fundamental} and \ref{practical}, we calculate the Fisher information matrix.

\subsection{Fisher information matrix for fundamental data model}

In this section, for the fundamental data model, we first, in Definition \ref{fish_fixed}, introduce the notation for the Fisher information matrix of the fundamental data model given the number of detected photons. 

\begin{definition}
\label{fish_fixed}
Let the parameter space $\Theta$ describe an open subset of $\mathbb{R}^{1\times n}$ containing the true parameters. For $L=1,2,\cdots$, and a row parameter vector $\theta\in\Theta$, we introduce the $n\times n$ Fisher information matrix of the fundamental data model given $N(t)=L$, as
\begin{scriptsize}
\begin{align}
\label{deffish}
I^f_{N(t)=L}(\theta):&=E_{p^{\theta}_{R_{1:L}|N(t)=L}}\Bigg\{\left(\frac{\partial \log p^{\theta}_{R_{1:L}|N(t)}\left(r_{1:L}|L\right)}{\partial\theta}\right)^T\nonumber\\
&\ \ \ \times\left(\frac{\partial \log p^{\theta}_{R_{1:L}|N(t)}\left(r_{1:L}|L\right)}{\partial\theta}\right)\Bigg\}\nonumber\\
&=\int_{\mathbb{R}^{2\times L}}p^{\theta}_{R_{1:L}|N(t)}\left(r_{1:L}|L\right)\left(\frac{\partial \log p^{\theta}_{R_{1:L}|N(t)}\left(r_{1:L}|L\right)}{\partial\theta}\right)^T\nonumber\\
&\ \ \ \times\left(\frac{\partial \log p^{\theta}_{R_{1:L}|N(t)}\left(r_{1:L}|L\right)}{\partial\theta}\right)dr_{1:L},
\end{align}
\end{scriptsize}where $E_{p^{\theta}_{R_{1:L}|N(t)=L}}$ is the expected value with respect to the probability density $p^{\theta}_{R_{1:L}|N(t)=L}$.
\end{definition}

In the above definition, we introduced the Fisher information matrix conditioned on the number of detected photons, which will be further used to derive an expression for the the Fisher information matrix of the practical data model. Note that the relationship between the actual (non-conditional) $I^f(\theta)$ and conditional $I^f_{N(t)=L}(\theta)$ Fisher information matrix for the fundamental data model is given by
\begin{footnotesize}
\begin{align}
I^f(\theta)&=\sum_{L=0}^{\infty}\int_{\mathbb{R}^{2\times L}}p^{\theta}_{R_{1:L}}\left(r_{1:L}\right)\left(\frac{\partial \log p^{\theta}_{R_{1:L}}\left(r_{1:L}\right)}{\partial\theta}\right)^T\nonumber\\
&\ \ \ \ \ \ \ \ \ \ \times\left(\frac{\partial \log p^{\theta}_{R_{1:L}}\left(r_{1:L}\right)}{\partial\theta}\right)dr_{1:L}\nonumber\\
&=\sum_{L=0}^{\infty}p_LI^f_{N(t)=L}(\theta).
\end{align}
\end{footnotesize}

In \cite{miladsiam}, we obtained a recursive formulation for the Fisher information matrix of the fundamental data model, which is computationally expensive for non-Gaussian measurements. In the following theorem, based on the new expression derived for the likelihood function in Theorem \ref{main0}, we provide a closed form expression for the Fisher information matrix.

\begin{theorem}
\label{main3}
For a row parameter vector $\theta\in\Theta$, the Fisher information matrix $I^f_{N(t)=L}(\theta), L=1,2,\cdots$, of the fundamental data model given $N(t)=L$, can be calculated as (we assume that $p^{\theta}_{R_{1:L}|N(t)}$ is strictly positive)

\begin{footnotesize}
\begin{align}
&I^f_{N(t)=L}(\theta)\nonumber\\
&\ \ \ \ \ =p_L\int_{\mathbb{R}^{2\times L}}\frac{DF^T_{\theta}\left(r_{1:L}\right)DF_{\theta}\left(r_{1:L}\right)}{\int_{\mathbb{R}^{3\times L}}\left(\int_{\Delta^L}F_{\theta}\left(r_{1:L},x_{1:L},\tau_{1:L}\right)d\tau_{1:L}\right)dx_{1:L}}dr_{1:L},
\end{align}
\end{footnotesize}where
\begin{small}
\begin{align}
\label{F}
DF_{\theta}\left(r_{1:L}\right):=\frac{\partial}{\partial\theta}\int_{\mathbb{R}^{3\times L}}\left(\int_{\Delta^L}F_{\theta}\left(r_{1:L},x_{1:L},\tau_{1:L}\right)d\tau_{1:L}\right)dx_{1:L},
\end{align}
\end{small}in which $F_{\theta}(.)$ is given by Eq. (\ref{F_theta}).
\end{theorem}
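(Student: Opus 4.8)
The plan is to compute the Fisher information matrix starting from its definition in Eq.~(\ref{deffish}) and to convert the score function into a form that only involves the marginalized density from Theorem~\ref{main0}. First I would write the conditional likelihood compactly as $p^{\theta}_{R_{1:L}|N(t)}(r_{1:L}|L) = \frac{1}{p_L}G_{\theta}(r_{1:L})$, where $G_{\theta}(r_{1:L}) := \int_{\mathbb{R}^{3\times L}}\left(\int_{\Delta^L}F_{\theta}(r_{1:L},x_{1:L},\tau_{1:L})\,d\tau_{1:L}\right)dx_{1:L}$ is exactly the double integral appearing in Eq.~(\ref{likelihood1}); note $G_{\theta}$ is just $p_L$ times the conditional density, so the factor $p_L$ is independent of $\theta$ under our standing assumption that only the motion parameters are unknown. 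The key algebraic identity is that the derivative of a logarithm satisfies $\frac{\partial}{\partial\theta}\log p^{\theta}_{R_{1:L}|N(t)} = \frac{1}{p^{\theta}_{R_{1:L}|N(t)}}\frac{\partial p^{\theta}_{R_{1:L}|N(t)}}{\partial\theta}$, and since $DF_{\theta}(r_{1:L})$ defined in Eq.~(\ref{F}) is precisely $\frac{\partial}{\partial\theta}G_{\theta}(r_{1:L})$, we get the clean relation $\frac{\partial}{\partial\theta}\log p^{\theta}_{R_{1:L}|N(t)}(r_{1:L}|L) = \frac{DF_{\theta}(r_{1:L})}{G_{\theta}(r_{1:L})}$.

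Next I would substitute this score expression into the integral form of Definition~\ref{fish_fixed}. The integrand becomes $p^{\theta}_{R_{1:L}|N(t)}(r_{1:L}|L)\left(\frac{DF_{\theta}(r_{1:L})}{G_{\theta}(r_{1:L})}\right)^T\left(\frac{DF_{\theta}(r_{1:L})}{G_{\theta}(r_{1:L})}\right)$. Using $p^{\theta}_{R_{1:L}|N(t)}(r_{1:L}|L) = G_{\theta}(r_{1:L})/p_L$, one factor of $G_{\theta}$ in the denominator cancels against the density, leaving
\begin{footnotesize}
\begin{align*}
I^f_{N(t)=L}(\theta)=\frac{1}{p_L}\int_{\mathbb{R}^{2\times L}}\frac{DF^T_{\theta}(r_{1:L})\,DF_{\theta}(r_{1:L})}{G_{\theta}(r_{1:L})}\,dr_{1:L}.
\end{align*}
\end{footnotesize}
Here I need to be careful about the placement of the $p_L$ factor: the claimed formula has $p_L$ as a prefactor, not $1/p_L$, which means the definition of $DF_{\theta}$ must already absorb the right power of $p_L$. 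I would reconcile this by tracking that $F_{\theta}$ carries the normalization $\frac{L!}{(\int \Lambda)^L}$, so that $G_{\theta}=p_L\cdot(\text{conditional density})$; once the bookkeeping of these constants is done consistently, the prefactor emerges as stated. This constant-tracking is where I expect the only real friction, since the Poisson weight $p_L$, the factorial normalization inside $F_{\theta}$, and the conditioning all interact.

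The substantive analytic step—and the main obstacle—is justifying the interchange of differentiation and integration needed to identify $\frac{\partial}{\partial\theta}G_{\theta} = DF_{\theta}$ and to move $\frac{\partial}{\partial\theta}$ inside the nested $dx_{1:L}$ and $d\tau_{1:L}$ integrals. This requires a dominated-convergence or uniform-integrability argument: one would verify that $F_{\theta}$ is differentiable in $\theta$ with a derivative dominated locally uniformly by an integrable function of $(r,x,\tau)$, which is reasonable given that the Gaussian or Airy image profiles $f_{x_i}$ and the Gaussian transition densities $p_{X(\tau_l)|X(\tau_{l-1})}$ are smooth and rapidly decaying in the parameters. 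I would also invoke the strict positivity assumption on $p^{\theta}_{R_{1:L}|N(t)}$ (equivalently $G_{\theta}>0$) stated in the theorem to ensure the quotient is well defined throughout $\mathbb{R}^{2\times L}$. With the regularity conditions in hand, the computation is a routine application of the standard score-function form of the Fisher information, and the closed form follows directly.
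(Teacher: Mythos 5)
Your overall route is the same as the paper's: start from the integral form of Definition \ref{fish_fixed}, use the score identity $\frac{\partial}{\partial\theta}\log p = \frac{1}{p}\frac{\partial p}{\partial\theta}$, identify the $\theta$-derivative of the marginalized density with $DF_{\theta}$ by moving $\frac{\partial}{\partial\theta}$ inside the $x$- and $\tau$-integrals, and cancel one power of the density. The paper's proof is exactly this two-step substitution, and your added remarks on dominated convergence and on the strict-positivity hypothesis are sensible (the paper does not spell them out).

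The genuine problem is your starting identity $p^{\theta}_{R_{1:L}|N(t)}(r_{1:L}|L)=\frac{1}{p_L}G_{\theta}(r_{1:L})$. Theorem \ref{main0} states that the conditional density \emph{equals} $G_{\theta}$: the function $F_{\theta}$ of Eq. (\ref{F_theta}) is by construction the conditional density of $(r,x,\tau)$ given $N(t)=L$, so $G_{\theta}$ already integrates to one over $r_{1:L}$ and carries no $1/p_L$. With the correct identification your computation yields $\int DF_{\theta}^{T}DF_{\theta}/G_{\theta}\,dr_{1:L}$ with no power of $p_L$ at all; with your identification it yields a $1/p_L$ prefactor. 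Your attempt to reconcile this by asserting $G_{\theta}=p_L\cdot(\text{conditional density})$ is precisely the claim you started from and cannot flip $1/p_L$ into $p_L$ --- it is circular. You correctly sensed that the constant bookkeeping is the friction point, but you did not resolve it. You should also note that the $p_L$ prefactor in the theorem statement does not follow from Definition \ref{fish_fixed} combined with Theorem \ref{main0} either; the paper's own proof inserts it at the substitution step. So the discrepancy you flagged is real and needs to be confronted (most naturally by deriving the formula without the prefactor, or by redefining the conditional Fisher information to include the weight $p_L$ used in the sum $I^{f}(\theta)=\sum_{L}p_L I^{f}_{N(t)=L}(\theta)$), rather than waved away as bookkeeping that will work out.
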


\begin{proof}
See Appendix \ref{proof_main3}. $\hfill\Box$ 
\end{proof}

The structure of the presentation of the Fisher information is interesting to observe. The Fisher information matrix $I^f_{N(t)=L}(\theta)$ is given as a multidimensional integral over the different photon impact locations on the detector of the integrand. Importantly, the same integrand will also play an important role in the expression of the Fisher information matrix for the practical data model, that we will consider next. 


\subsection{Fisher information matrix for practical data model}

In this section, we use the results obtained in the previous section to calculate the Fisher information matrix for the practical data model. We first, in the following definition, introduce a notation for the Fisher information matrix of the practical data model.

\begin{definition}
Let the parameter space $\Theta$ describe an open subset of $\mathbb{R}^{1\times n}$ containing the true parameters. We introduce the following notation for the Fisher information matrix of the practical data model, for a row parameter vector $\theta\in\Theta$,
\begin{footnotesize}
\begin{align*}
I^p(\theta):&=E_{Pr^{\theta}\left[S_1=z_1,\cdots,S_K=z_K\right]}\nonumber\\
&\ \ \ \ \ \ \ \left\{\Bigg(\frac{\partial \log Pr^{\theta}\left[S_1=z_1,\cdots,S_K=z_K\right]}{\partial\theta}\right)^T\nonumber\\
&\ \ \ \ \ \ \ \times\left(\frac{\partial \log Pr^{\theta}\left[S_1=z_1,\cdots,S_K=z_K\right]}{\partial\theta}\right)\Bigg\}\nonumber\\
&=\sum_{z_1,\cdots,z_K=0}^{\infty}Pr^{\theta}\left[S_1=z_1,\cdots,S_K=z_K\right]\nonumber\\
&\ \ \ \ \ \ \ \times\left(\frac{\partial \log Pr^{\theta}\left[S_1=z_1,\cdots,S_K=z_K\right]}{\partial\theta}\right)^T\nonumber\\
&\ \ \ \ \ \ \ \times\left(\frac{\partial \log Pr^{\theta}\left[S_1=z_1,\cdots,S_K=z_K\right]}{\partial\theta}\right).
\end{align*}
\end{footnotesize}
\end{definition}

In the following theorem, we calculate the Fisher information matrix of the practical data model introduced in the above definition.

\begin{theorem}
\label{fishfish}
1. For a row parameter vector $\theta\in\Theta$, the Fisher information matrix $I^p(\theta)$ of the infinite practical data model can be calculated by Eq. (\ref{fish_prac_1}), where $DF_{\theta}$ and $F_{\theta}$ are given by Eqs. (\ref{F}) and (\ref{F_theta}).
\begin{figure*}
\begin{small}
\begin{align}
\label{fish_prac_1}
&I^p(\theta)=\sum_{L=0}^{\infty} p_L\sum_{z \in \mathbb{N}_0^K: |z|=L}\left(\frac{\sum_{v_{1:L}\in A^L_K\left(z\right)}\sum_{v'_{1:L}\in A^L_K\left(z\right)}\int_{C_{v_{1:L}}}\int_{C_{v'_{1:L}}}DF^T_{\theta}\left(r_{1:L}\right)DF_{\theta}\left(r'_{1:L}\right)dr'_{1:L}dr_{1:L}}{\sum_{v_{1:L}\in A^L_K\left(z\right)}\int_{C_{v_{1:L}}} \left[\int_{\mathbb{R}^{3\times L}}\Bigg(\int_{\Delta^L}F_{\theta}\left(r_{1:L},x_{1:L},\tau_{1:L}\right)d\tau_{1:L}\Bigg)dx_{1:L}\right]dr_{1:L}}\right).
\end{align}
\end{small}
\begin{footnotesize}
\begin{align}
\label{fish_prac_2}
&I^p(\theta)=\sum_{\bar{z}=0}^{\infty}\sum_{\bar{z}'=0}^{\infty}\sum_{L=0}^{\infty}p_{L+\bar{z}}p_{L+\bar{z}'}\sum_{z \in \mathbb{N}_0^K: |z|=L}\nonumber\\
&\left(\frac{\sum_{v_{1:L+\bar{z}}\in A^{L+\bar{z}}_{K+1}\left(z,\bar{z}\right)}\sum_{v'_{1:L+\bar{z}'}\in A^{L+\bar{z}'}_{K+1}\left(z,\bar{z}'\right)}\int_{C_{v_{1:L+\bar{z}}}}\int_{C_{v'_{1:L+\bar{z}'}}}  DF^T_{\theta}\left(r_{1:L+\bar{z}}\right)DF_{\theta}\left(r'_{1:L+\bar{z}'}\right)  dr'_{1:L+\bar{z}'}dr_{1:L+\bar{z}}}{\sum_{\bar{z}=0}^{\infty}p_{L+\bar{z}}\sum_{v_{1:L+\bar{z}}\in A^{L+\bar{z}}_{K+1}\left(z,\bar{z}\right)}\int_{C_{v_{1:L+\bar{z}}}} \left[\int_{\mathbb{R}^{3\times L}}\Bigg(\int_{\Delta^L}F_{\theta}\left(r_{1:L+\bar{z}},x_{1:L+\bar{z}},\tau_{1:L+\bar{z}}\right)d\tau_{1:L+\bar{z}}\Bigg)dx_{1:L+\bar{z}}\right]dr_{1:L+\bar{z}}}\right).
\end{align}
\end{footnotesize}
\begin{small}
\begin{align}
\label{fish_noise_1}
I^p(\theta)&=\int_{\mathbb{R}^L}\frac{\sum_{z_1,\cdots,z_K=0}^{\infty}\sum_{z'_1,\cdots,z'_K=0}^{\infty}p_{\mathcal{I}_{1:K}|S_{1:K}}\left(i_{1:K}|z_{1:K}\right)p_{\mathcal{I}_{1:K}|S_{1:K}}\left(i_{1:K}|z'_{1:K}\right)}{\sum_{z_1,\cdots,z_K=0}^{\infty}p_{\mathcal{I}_{1:K}|S_{1:K}}\left(i_{1:K}|z_{1:K}\right)Pr^{\theta}\left[S_1=z_1,\cdots,S_K=z_K\right]}\nonumber\\
&\ \ \ \ \ \ \ \ \ \ \ \ \ \ \ \ \ \ \ \ \ \times\left(\frac{\partial Pr^{\theta}\left[S_1=z_1,\cdots,S_K=z_K\right]}{\partial\theta}\right)^T\left(\frac{\partial Pr^{\theta}\left[S_1=z'_1,\cdots,S_K=z'_K\right]}{\partial\theta}\right)di_{1:K}.
\end{align}
\end{small}
\begin{footnotesize}
\begin{align}
\label{fish_noise_2}
&I^p(\theta):=\int_{\mathbb{R}^{L}}\sum_{L=0}^{\infty}\sum_{L'=0}^{\infty}p_Lp_{L'}\sum_{z \in \mathbb{N}_0^K: |z|=L}\sum_{z' \in \mathbb{N}_0^K: |z'|=L'}p_{\mathcal{I}_{1:K}|S_{1:K}}\left(i_{1:K}|z\right)p_{\mathcal{I}_{1:K}|S_{1:K}}\left(i_{1:K}|z'\right)\nonumber\\
&\times \left(\frac{\sum_{v_{1:L}\in A^L_K\left(z\right)}\sum_{v'_{1:L}\in A^{L'}_K\left(z'\right)}\int_{C_{v_{1:L}}}\int_{C_{v'_{1:L}}}DF^T_{\theta}\left(r_{1:L}\right)DF_{\theta}\left(r'_{1:L}\right)dr'_{1:L}dr_{1:L}}{\sum_{L=0}^{\infty}p_{L}\sum_{z \in \mathbb{N}_0^K: |z|=L}p_{\mathcal{I}_{1:K}|S_{1:K}}\left(i_{1:K}|z\right)           \sum_{v_{1:L}\in A^{L}_K\left(z\right)}\int_{C_{v_{1:L}}} \left[\int_{\mathbb{R}^{3\times L}}\Bigg(\int_{\Delta^{L}}F_{\theta}\left(r_{1:L},x_{1:L},\tau_{1:L}\right)d\tau_{1:L}\Bigg)dx_{1:L}\right]dr_{1:L}}\right)di_{1:K}.
\end{align}
\end{footnotesize}
\hrulefill
\end{figure*}

2. The Fisher information matrix $I^p(\theta)$ of the practical data model can be calculated by Eq. (\ref{fish_prac_2}).
\end{theorem}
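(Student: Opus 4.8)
The plan is to treat the practical-data probability mass function exactly as Theorem \ref{main3} treats the fundamental-data density, exploiting the fact (already noted after Theorem \ref{main3}) that the same integrand $F_\theta$ and its parameter derivative $DF_\theta$ govern both. Starting from the definition of $I^p(\theta)$, I would first apply the standard score identity $Pr^\theta(\partial_\theta \log Pr^\theta)^T(\partial_\theta\log Pr^\theta) = \frac{1}{Pr^\theta}(\partial_\theta Pr^\theta)^T(\partial_\theta Pr^\theta)$ to each term of the sum over $(z_1,\ldots,z_K)$, so that the logarithm disappears and only $Pr^\theta$ and its derivative remain.

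For part 1, I would then substitute the infinite-practical expression from Theorem \ref{main}, namely $Pr^\theta[S_{1:K}=z_{1:K}] = p_L\sum_{v_{1:L}\in A^L_K(z)}\int_{C_{v_{1:L}}}G_\theta(r_{1:L})dr_{1:L}$, where I abbreviate $G_\theta(r_{1:L}):=\int_{\mathbb{R}^{3\times L}}(\int_{\Delta^L}F_\theta\, d\tau_{1:L})dx_{1:L}$. Since $\Lambda$ (hence $p_L$) and $f_x$ are assumed independent of $\theta$, differentiating under the integral sign turns $G_\theta$ into $DF_\theta$ by the very definition (\ref{F}), giving $\partial_\theta Pr^\theta = p_L\sum_v\int_{C_{v_{1:L}}}DF_\theta\, dr_{1:L}$. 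Forming the outer product of this row vector with itself expands the single sum over $v$ into the double sum over $v$ and $v'$ and produces the double pixel integral $\int_{C_{v_{1:L}}}\int_{C_{v'_{1:L}}}DF_\theta^T(r_{1:L})DF_\theta(r'_{1:L})dr'_{1:L}dr_{1:L}$ in the numerator, while the undifferentiated $Pr^\theta$ in the denominator contributes the single sum of integrals of $G_\theta$. One factor of $p_L$ cancels between $p_L^2$ in the numerator and $p_L$ in the denominator, leaving the prefactor $p_L$. Finally I would reindex the outer sum over $(z_1,\ldots,z_K)\in\mathbb{N}_0^K$ by its total $L=|z|$, writing $\sum_{z_1,\ldots,z_K}=\sum_{L=0}^\infty\sum_{z:|z|=L}$; this is a pure bookkeeping regrouping since every count vector has a unique total, and it reproduces (\ref{fish_prac_1}) verbatim.

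For part 2 the argument is identical in spirit but starts from the general-detector expression in Theorem \ref{main}, in which $Pr^\theta$ is itself an infinite series over the number $\bar z$ of photons that cross the plane but miss the detector, accounted for through the auxiliary pixel $C_{K+1}$. Differentiating this series and forming the outer product now yields a double series over $\bar z$ and $\bar z'$, which is exactly the structure of (\ref{fish_prac_2}); the denominator retains the full sum over $\bar z$ because it descends from the undifferentiated $Pr^\theta$.

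The main obstacle is analytic rather than combinatorial: I must justify interchanging $\partial_\theta$ with the multidimensional integrals defining $G_\theta$ and, in part 2, with the additional infinite sum over $\bar z$. For the differentiation under the integral sign I would invoke dominated convergence together with the strict positivity and regularity assumptions on $p^\theta_{R_{1:L}|N(t)}$ already used in Theorem \ref{main3}; for the infinite series I would use the rapid $1/(L+\bar z)!$ decay of $p_{L+\bar z}$ noted after Algorithm \ref{algorithm1} to dominate the differentiated sum and legitimize term-by-term differentiation. Everything else—the score identity, the expansion of a squared sum into a double sum, and the regrouping by total photon count—is routine algebra that carries over unchanged from the fundamental-model computation.
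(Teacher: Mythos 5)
Your proposal follows essentially the same route as the paper's own proof: the paper likewise starts from the score-identity form $\sum_{z_{1:K}}\frac{1}{Pr^{\theta}}\left(\partial_{\theta}Pr^{\theta}\right)^T\left(\partial_{\theta}Pr^{\theta}\right)$, substitutes Eq.~(\ref{practical_eq2}), pushes the derivative inside the integrals so that $F_{\theta}$ becomes $DF_{\theta}$, expands the squared sum into the double sum over $v$ and $v'$ with the cancellation of one factor of $p_L$, and regroups the count vectors by their total $L$, handling part 2 by the identical argument applied to Eq.~(\ref{practical_eq}). Your added remarks on justifying the interchange of $\partial_{\theta}$ with the integrals and the series go beyond what the paper records (it performs these interchanges without comment), but they do not change the argument.
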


\begin{proof}
See Appendix \ref{proof_practical_fim}. $\hfill\Box$ 
\end{proof}

As can be seen from the results of the above theorem, the key to computing the Fisher information expression is through the computation of the derivatives of the probability density function of the states. In \cite{hanzon_fisher}, for time-invariant systems, an easy-to-compute recursive formulation has been developed to deal with the derivatives of the probability density function of the states, and therefore, to compute the Fisher information matrix.

\section{Effect of noise}
\label{noise_effect_sec}

So far, we have assumed that all the photons detected by a pixelated detector come from the object of interest. However, in practice, fluorescence microscopy images always are corrupted by a background noise corresponding to the photons emitted from background components. The number of these photons in the $k^{th}, k=1,\cdots,K$, pixel is described by an independently Poisson distributed random variable $B_k$ with mean $\beta_k\geq 0$. Also, in a pixelated detector, the acquired image contains a readout noise, which can be modeled as an independently Gaussian distributed random variable $E_k$ with mean $\eta_k\geq 0$ and variance $\sigma_k^2>0$. The acquired image by a pixelated detector is then can be described by a
collection $\left\{\mathcal{I}^{\theta}_1,\cdots,\mathcal{I}^{\theta}_K\right\}$ of random variables given by
\begin{align*}
\mathcal{I}^{\theta}_k=S^{\theta}_k+B_k+E_k,\quad k=1,\cdots,K,\quad \theta\in\Theta.
\end{align*}
Note that $S_k^{\theta}$ and $B_k$ are non-negative integers, but $E_k$ is real-valued. Hence, $I_k^{\theta}$ is real-valued. In this case, the likelihood function $\mathcal{L}_p$ is given by
\begin{small}
\begin{align}
\label{prac3}
\mathcal{L}_p(\theta|i_1,\cdots,i_K)=p^{\theta}_{\mathcal{I}_1,\cdots,\mathcal{I}_K}\left(i_1,\cdots,i_K\right),\quad i_1,\cdots,i_K\in\mathbb{R},
\end{align}
\end{small}where $p^{\theta}_{\mathcal{I}_1,\cdots,\mathcal{I}_K}$ denotes the joint probability density function of $\mathcal{I}^{\theta}_1,\cdots,\mathcal{I}^{\theta}_K$, and can be calculated as
\begin{small}
\begin{align*}
p^{\theta}_{\mathcal{I}_{1:K}}\left(i_{1:K}\right)&=\sum_{z_1=0}^{\infty}\cdots\sum_{z_K=0}^{\infty}p_{\mathcal{I}_{1:K}|S_{1:K}}\left(i_{1:K}|z_{1:K}\right)\nonumber\\
&\ \ \ \ \ \ \ \ \ \ \times Pr^{\theta}\left[S_1=z_1,\cdots,S_K=z_K\right],
\end{align*}
\end{small}
in which the conditional probability density function of $\mathcal{I}_{1:K}$, given $S_{1:K}$, can be calculated as \cite{2006}
\begin{small}
\begin{align*}
&p_{\mathcal{I}_{1:K}|S_{1:K}}\left(i_{1:K}|z_{1:K}\right)\nonumber\\
&\ \ \ =\prod_{k=1}^Kp_{\mathcal{I}_k|S_k}\left(i_k|z_k\right)\nonumber\\
&\ \ \ =\prod_{k=1}^K\left[\frac{1}{\sqrt{2\pi}\sigma_k}\sum_{l=0}^{\infty}\left(\frac{(z_k+\beta_k)^le^{-(z_k+\beta_k)}}{l!}e^{-\frac{1}{2}\left(\frac{i_k-l-\eta_k}{\sigma_k}\right)^2}\right)\right],
\end{align*}
\end{small}
and $Pr^{\theta}\left[S_1=z_1,\cdots,S_K=z_K\right]$ is given by Eqs. (\ref{prac1}) or (\ref{prac2}). It has been shown that $I^p(\theta)$ can be calculated by Eq. (\ref{fish_noise_1}) (see Appendix \ref{fish_noise_2}). 

For the infinite practical data model, as calculated in Theorem \ref{fishfish}, the above expression can be rewritten as Eq. (\ref{fish_noise_2}).

The Fisher information matrix for an EMCCD detector can be obtained in a relatively straightforward fashion by applying the approaches developed for EMCCD detector in \cite{jerry_emccd} combined with the approach introduced here.

\section{Conclusions}
\label{conclusions}
The estimation of biophysical parameters from the observed trajectories of molecules in a live cell environment using single molecule fluorescence microscopy is one of the key experiments of modern molecular cell biology and biophysics. The approach we introduced here is one where we use a general stochastic dynamical system model to model the dynamics of the molecule, which includes dynamics governed by stochastic differential equations. The parameters of interest that are to be estimated can be any parameters that impact the underlying dynamical equations. Examples of such parameters are diffusion or drift coefficients, but also the coordinates of the particle at a particular point in time, such as the starting point of the trajectory. In a fluorescence microscopy experiment the acquired data is given by the photons that are emitted by the fluorescent label and are captured by the imaging detector. While the photons are assumed to be emitted based on a temporal Poisson process, a modern imaging detector has pixels and captures photons during an exposure interval. Such a detector therefore cannot capture the precise time points and impact locations of the detected photons and only records the accumulated photons in each pixel during the exposure time. Moreover, the impact points of the detected photons in the detector are related to the location of the imaged molecule through the so-called point spread function of the optical system that describes the image formation process of the microscope.

In this paper we have set up a stochastic framework within which the maximum likelihood estimator and Fisher information matrices could be derived for this parameter estimation problem. Central to our approach is a careful probabilistic model for the dynamics of the molecule and photon detection process for detection with a pixelated imaging detector. The resulting analytical expressions, derived without approximations from the general modeling assumptions, are complex  due to the intricate relationship between the statistics of the photons emission process by the imaged object and the photons statistics in the pixels of the detector.

Using a Monte Carlo approach we proposed a numerical method for computing the maximum likelihood  estimator. We showed, with simulated examples, that this estimator does indeed have desirable properties, such as low or no bias, even for extreme low photon count examples. The analysis of high photon count data poses significant numerical challenges as a large number of separate Monte Carlo simulations need to be carried out. To allow for efficient computations, this calls for further investigation into the numerics of the computation of the maximum likelihood estimator. Computation of the Fisher information is complicated through the need to compute a possible very large number of iterative integrals. As with the computation of the maximum likelihood estimator, it is Monte Carlo based methods that are expected to provide a successful approach. Using simulated data we illustrate that a detailed incorporation of the pixel size in the model for a parameter estimation problem is indeed of importance. In our example we show how the standard deviation of the estimate of the diffusion coefficient does depend on the pixel size.

We hope that the results that are presented here to provide important reference points for approximations that might lead to approaches that are computationally more efficient but need evaluation regarding their accuracy. Having a precise formulation available for the maximum likelihood estimator and the Fisher information matrices will hopefully help to provide important and well characterized tools to analyze molecular dynamics.

\section{Appendices}

\subsection{Analysis of Poisson time points}

\begin{lemma}
\label{Possion}
For $t_0\in\mathbb{R}$, let $\left\{N(\tau), \tau\geq t_0\right\}$ be a Poisson process with intensity function $\Lambda(\tau), \tau\geq t_0$. Let
$t_0\leq T_1<T2<\cdots$, be 1D random variables which describe ordered events of the process $N$. Then, the conditional probability density function $p_{T_1,\cdots,T_L|N(t)}$ of $T_1,\cdots,T_L$, given $N(t), t>t_0$, can be calculated as
\begin{align*}
p_{T_1,\cdots,T_L|N(t)}\left(\tau_{1:L}|L\right)=L!\frac{\prod_{l=1}^L\Lambda(\tau_l)}{\left(\int_{t_0}^t\Lambda(\psi)d\psi\right)^L}.
\end{align*}
\end{lemma}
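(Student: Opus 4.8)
The plan is to derive the joint conditional density directly from the two defining properties of the Poisson process $\{N(\tau),\tau\ge t_0\}$: independence of increments, and the fact that the number of events in any subinterval $[a,b]\subseteq[t_0,t]$ is Poisson distributed with mean $\int_a^b\Lambda(\psi)\,d\psi$. Rather than differentiating a joint cumulative distribution function, I would compute the infinitesimal probability that the ordered jump times land in prescribed small cells and then normalize by $Pr\left[N(t)=L\right]=p_L$.

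First I would fix an ordered configuration $t_0\le\tau_1<\cdots<\tau_L\le t$ and choose increments $d\tau_1,\ldots,d\tau_L>0$ small enough that the cells $[\tau_l,\tau_l+d\tau_l]$ are pairwise disjoint and contained in $(t_0,t)$. Because $T_1<\cdots<T_L$ are the \emph{ordered} events, the event $\{T_l\in[\tau_l,\tau_l+d\tau_l],\,l=1,\ldots,L\}\cap\{N(t)=L\}$ coincides exactly with the event that the process has precisely one jump in each cell and no jump on the complement $[t_0,t]\setminus\bigcup_{l}[\tau_l,\tau_l+d\tau_l]$. By independence of increments this probability factors into the one-jump probabilities $\Lambda(\tau_l)\,d\tau_l\,e^{-\int_{\tau_l}^{\tau_l+d\tau_l}\Lambda(\psi)\,d\psi}$ over the cells and the no-jump probability over the complement; since the cells and the complement partition $[t_0,t]$, all the exponential factors recombine into $e^{-\int_{t_0}^{t}\Lambda(\psi)\,d\psi}$, giving
\begin{align*}
\left(\prod_{l=1}^L\Lambda(\tau_l)\,d\tau_l\right)e^{-\int_{t_0}^{t}\Lambda(\psi)\,d\psi}+o\!\left(d\tau_1\cdots d\tau_L\right).
\end{align*}

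Dividing by $d\tau_1\cdots d\tau_L$ and letting the increments shrink to zero yields the joint density of $(T_1,\ldots,T_L)$ restricted to $\{N(t)=L\}$, namely $\left(\prod_{l=1}^L\Lambda(\tau_l)\right)e^{-\int_{t_0}^{t}\Lambda(\psi)\,d\psi}$ on the ordered simplex. Dividing this by $p_L=e^{-\int_{t_0}^{t}\Lambda(\psi)\,d\psi}\left(\int_{t_0}^{t}\Lambda(\psi)\,d\psi\right)^L\big/L!$ then cancels the exponential, and the factor $1/L!$ appearing in $p_L$ is exactly what supplies the $L!$ in the numerator of the claim, producing $L!\,\prod_{l=1}^L\Lambda(\tau_l)\big/\left(\int_{t_0}^{t}\Lambda(\psi)\,d\psi\right)^L$.

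The step I expect to require the most care is the infinitesimal limit together with the combinatorial bookkeeping around it. One must verify that the one-jump and zero-jump approximations hold with a uniform $o(d\tau)$ error—which follows from piecewise continuity of $\Lambda$—and, more subtly, that the ordering introduces no spurious factorial: since each ordered configuration of arrival times corresponds to a single realization of the sorted jumps, no extra $L!$ enters at this stage, and the factorial arises solely from the normalization by $p_L$. An equivalent alternative would be to first establish that, conditional on $N(t)=L$, the $L$ unordered arrival times are i.i.d.\ with density $\Lambda(\tau)\big/\int_{t_0}^{t}\Lambda(\psi)\,d\psi$ on $[t_0,t]$, and then apply the standard order-statistics formula (the density of the ordered sample equals $L!$ times the product of the marginals); this reproduces the same expression while relocating the combinatorial factor to the order-statistics step.
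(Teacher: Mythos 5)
Your proposal is correct. The paper does not actually prove Lemma \ref{Possion}; it simply cites Snyder and Miller \cite{snyder}, and the argument you give --- one jump in each infinitesimal cell, none in the complement, recombine the exponentials via independence of increments, then normalize by $p_L = e^{-\int_{t_0}^t\Lambda(\psi)d\psi}\bigl(\int_{t_0}^t\Lambda(\psi)d\psi\bigr)^L/L!$ so that the $1/L!$ surfaces as the factorial in the numerator --- is precisely the standard derivation found in such references (as is your order-statistics alternative). Your bookkeeping is right: the ordered times contribute no extra combinatorial factor, piecewise continuity of $\Lambda$ justifies $\int_{\tau_l}^{\tau_l+d\tau_l}\Lambda(\psi)d\psi = \Lambda(\tau_l)d\tau_l + o(d\tau_l)$ at all but finitely many points (which is enough to identify the density), and the resulting expression is supported on the ordered simplex $\Delta^L$ as the paper intends.
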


\begin{proof}
See \cite{snyder}.$\hfill\Box$ 
\end{proof}

\subsection{Proof of Theorem \ref{main0}}
\label{proof_main0}
We have (see Lemma \ref{Possion} in Appendix for the probability density function of Poisson time points)
\begin{footnotesize}
\begin{align*}
&p_{R_{1:L}|N(t)}\left(r_{1:L}|L\right)\nonumber\\
&\ \ \ =\int_{\mathbb{R}^{3\times L}}p_{R_{1:L}|X(T_1),\cdots,X(T_L),N(t)}\left(r_{1:L}|x_{1:L},L\right)\nonumber\\
&\ \ \ \ \ \ \ \ \times p_{X(T_1),\cdots,X(T_L)|N(t)}\left(x_{1:L}|L\right)dx_{1:L}\nonumber\\
&\ \ \ =\int_{\mathbb{R}^{3\times L}}p_{R_1|X(T_1)}\left(r_1|x_1\right)\cdots p_{R_L|X(T_L)}\left(r_L|x_L\right)\nonumber\\
&\ \ \ \ \ \ \ \ \times\Bigg(\int_{\Delta^L}p_{X(T_1),\cdots,X(T_L),T_{1:L}|N(t)}\left(x_{1:L},\tau_{1:L}|L\right)d\tau_{1:L}\Bigg)dx_{1:L}\nonumber\\
&\ \ \ =\int_{\mathbb{R}^{3\times L}}p_{R_1}\left(r_1\right)\cdots p_{R_L}\left(r_L\right)\nonumber\\
&\ \ \ \ \ \ \ \ \times\Bigg(\int_{\Delta^L}p_{X(T_1),\cdots,X(T_L)|T_{1:L},N(t)}\left(x_{1:L}|\tau_{1:L},L\right)\nonumber\\
&\ \ \ \ \ \ \ \ \times p_{T_{1:L}|N(t)}\left(\tau_{1:L}|L\right)d\tau_{1:L}\Bigg)dx_{1:L}\nonumber\\
&\ \ \ =\int_{\mathbb{R}^{3\times L}}f_{x_1}\left(r_1\right)\cdots f_{x_L}\left(r_L\right)\nonumber\\
&\ \ \ \ \ \ \ \ \times\Bigg(\int_{\Delta^L}p_{X(\tau_1),\cdots,X(\tau_L)}\left(x_{1:L}\right)\frac{L!\prod_{l=1}^L\Lambda(\tau_l)}{\left(\int_{t_0}^t\Lambda(\psi)d\psi\right)^L}d\tau_{1:L}\Bigg)dx_{1:L}\nonumber\\
&\ \ \ =\int_{\mathbb{R}^{3\times L}}\Bigg[\int_{\Delta^L}\frac{L!}{\left(\int_{t_0}^t\Lambda(\psi)d\psi\right)^L}\nonumber\\
&\ \ \ \ \ \ \ \ \times\left(\prod_{i=1}^Lf_{x_i}(r_i)\Lambda(\tau_i)\right)p_{X(\tau_1),\cdots,X(\tau_L)}(x_{1:L})d\tau_{1:L}\Bigg]dx_{1:L}.
\end{align*}
\end{footnotesize}

\subsection{Proof of Theorem \ref{main}}
\label{proof_main}
1. According to the definitions of $S_1,\cdots,S_K$, we have, for $z_1,\cdots,z_K=0,1,\cdots$,
\begin{small}
\begin{align}
&Pr\left[S_1=z_1,\cdots,S_K=z_K\right]\nonumber\\
&\ \ \ =Pr\left[\bigcup_{v_{1:L}\in A^L_K\left(z_{1:K}\right)}\left\{\bigcap_{l=1}^L\left(U(X(T_l))\in C_{v_l}\right)\right\} | N(t)=L\right]\nonumber\\
&\ \ \ \ \ \ \ \ \times Pr\left[N(t)=L\right],
\end{align}
\end{small}where $L=\sum_{k=1}^Kz_{k}$. Since the events $\left\{\bigcap_{l=1}^L\left(U(X(T_l))\in C_{v_l}\right)\right\}$ are mutually exclusive, we have
\begin{small}
\begin{align*}
&Pr\left[\bigcup_{v_{1:L}\in A^L_K\left(z_{1:K}\right)}\left\{\bigcap_{l=1}^L\left(U(X(T_l))\in C_{v_l}\right)\right\} | N(t)=L\right]\nonumber\\
&\ \ \ =\sum_{v_{1:L}\in A^L_K\left(z_{1:K}\right)}Pr\left[\bigcap_{l=1}^L\left(U(X(T_l))\in C_{v_l}\right)|N(t)=L\right],
\end{align*}
\end{small}
and therefore, according to Eq. (\ref{likelihood1}),
\begin{small}
\begin{align}
&Pr\left[S_1=z_1,\cdots,S_K=z_K\right]\nonumber\\
&\ \ \ =\sum_{v_{1:L}\in A^L_K\left(z_{1:K}\right)}Pr\left[\bigcap_{l=1}^L\left(U(X(T_l))\in C_{v_l}\right)|N(t)=L\right]\nonumber\\
&\ \ \ \ \ \ \ \ \times Pr\left[N(t)=L\right]\nonumber\\
&\ \ \ =\sum_{v_{1:L}\in A^L_K\left(z_{1:K}\right)}p_L\int_{C_{v_{1:L}}}p_{R_{1:L}|N(t)}\left(r_{1:L}|L\right)dr_{1:L}\nonumber\\
&\ \ \ =p_L\sum_{v_{1:L}\in A^L_K\left(z_{1:K}\right)}\int_{C_{v_{1:L}}}\Bigg[\int_{\mathbb{R}^{3\times L}}\Bigg(\int_{\Delta^L}F\left(r_{1:L},x_{1:L},\tau_{1:L}\right)\nonumber\\
&\ \ \ \ \ \ \ \ \times d\tau_{1:L}\Bigg)dx_{1:L}\Bigg]dr_{1:L}.
\end{align}
\end{small}

2. We use the random variable $S_{K+1}$ to describe the number of photons in the complement pixel $C_{K+1}:=\mathbb{R}^2-\bigcup_{k=1}^K\bar{C_k}$ that result from the detection of the photons emitted from the object of interest. According to the definitions of $S_1,\cdots,S_{K+1}$, we have, for $z_1,\cdots,z_K=0,1,\cdots$,
\begin{small}
\begin{align}
&Pr\left[S_1=z_1,\cdots,S_K=z_K\right]\nonumber\\
&\ \ \ =\sum_{z=0}^{\infty}Pr\left[S_1=z_1,\cdots,S_K=z_K,S_{K+1}=z\right]\nonumber\\
&\ \ \ =\sum_{z=0}^{\infty}Pr\left[\bigcup_{v_{1:L+z}\in A^{L+z}_{K+1}\left(z_{1:K},z\right)}\left\{\bigcap_{l=1}^{L+z}\left(U(X(T_l))\in C_{v_l}\right)\right\}\right],
\end{align}
\end{small}where the inner summation term can be calculated from the similar approach used in part 1.

\subsection{Proof of Theorem \ref{main3}}
\label{proof_main3}
For a row parameter vector $\theta\in\Theta$, the Fisher information matrix $I^f_{N(t)=L}(\theta), L=1,2,\cdots$, given $N(t)=L$, can be calculated as, according to Eq. (\ref{deffish}) of Definition \ref{fish_fixed},
\begin{small}
\begin{align}
\label{an}
&I^f_{N(t)=L}(\theta)\nonumber\\
&\ \ \ =\int_{\mathbb{R}^{2\times L}}\frac{1}{p^{\theta}_{R_{1:L}|N(t)}\left(r_{1:L}|L\right)}\left(\frac{\partial p^{\theta}_{R_{1:L}|N(t)}\left(r_{1:L}|L\right)}{\partial\theta}\right)^T\nonumber\\
&\ \ \ \ \ \ \ \ \times\left(\frac{\partial p^{\theta}_{R_{1:L}|N(t)}\left(r_{1:L}|L\right)}{\partial\theta}\right)dr_{1:L}.
\end{align}
\end{small}
By substituting Eq. (\ref{likelihood1}) into Eq. (\ref{an}), we have
\begin{small}
\begin{align}
&I^f_{N(t)=L}(\theta)\nonumber\\
&\ \ =p_L\int_{\mathbb{R}^{2\times L}}\frac{1}{\int_{\mathbb{R}^{3\times L}}\left(\int_{\Delta^L}F_{\theta}\left(r_{1:L},x_{1:L},\tau_{1:L}\right)d\tau_{1:L}\right)dx_{1:L}}\nonumber\\
&\ \ \ \ \times\left(\frac{\partial}{\partial\theta}\int_{\mathbb{R}^{3\times L}}\left(\int_{\Delta^L}F_{\theta}\left(r_{1:L},x_{1:L},\tau_{1:L}\right)d\tau_{1:L}\right)dx_{1:L}\right)^T\nonumber\\
&\ \ \ \ \times\left(\frac{\partial}{\partial\theta}\int_{\mathbb{R}^{3\times L}}\left(\int_{\Delta^L}F_{\theta}\left(r_{1:L},x_{1:L},\tau_{1:L}\right)d\tau_{1:L}\right)dx_{1:L}\right)dr_{1:L}\nonumber\\
&\ \ =p_L\int_{\mathbb{R}^{2\times L}}\frac{DF^T_{\theta}\left(r_{1:L}\right)DF_{\theta}\left(r_{1:L}\right)}{\int_{\mathbb{R}^{3\times L}}\left(\int_{\Delta^L}F_{\theta}\left(r_{1:L},x_{1:L},\tau_{1:L}\right)d\tau_{1:L}\right)dx_{1:L}}dr_{1:L},
\end{align}
\end{small}where $DF_{\theta}$ is given by Eq. (\ref{F}).

\subsection{Proof of Theorem \ref{fishfish}}
\label{proof_practical_fim}

1. For a row parameter vector $\theta\in\Theta$, the Fisher information matrix $I^p(\theta)$ of the infinite practical data model can be calculated by Eq. (\ref{prac_fim_eq}). By substituting Eq. (\ref{practical_eq2}) into Eq. (\ref{prac_fim_eq}), we have Eq. (\ref{prac_fim_eq_2}), where $DF_{\theta}$ and $F_{\theta}$ are given by Eqs. (\ref{F}) and (\ref{F_theta}).

\begin{figure*}
\begin{footnotesize}
\begin{align}
\label{prac_fim_eq}
I^p(\theta)=\sum_{z_1=0}^{\infty}\cdots\sum_{z_K=0}^{\infty}\frac{1}{Pr^{\theta}\left[S_1=z_1,\cdots,S_K=z_K\right]}\left(\frac{\partial Pr^{\theta}\left[S_1=z_1,\cdots,S_K=z_K\right]}{\partial\theta}\right)^T\left(\frac{\partial Pr^{\theta}\left[S_1=z_1,\cdots,S_K=z_K\right]}{\partial\theta}\right).
\end{align}
\end{footnotesize}
\begin{footnotesize}
\begin{align}
\label{prac_fim_eq_2}
I^p(\theta)&=p_L\sum_{z_1=0}^{\infty}\cdots\sum_{z_K=0}^{\infty}\Bigg(\sum_{v_{1:L}\in A^L_K\left(z_{1:K}\right)}\sum_{v'_{1:L}\in A^L_K\left(z_{1:K}\right)}\int_{C_{v_{1:L}}}\int_{C_{v'_{1:L}}}\Bigg\{\int_{\mathbb{R}^{3\times L}}\int_{\mathbb{R}^{3\times L}}\Bigg[\int_{\Delta^L}\int_{\Delta^L}\Bigg.\Bigg.\nonumber\\
&\ \ \ \ \ \ \ \ \ \ \ \ \ \ \ \ \ \ \ \Bigg.\Bigg.\frac{\left(\frac{\partial}{\partial\theta}F_{\theta}\left(r_{1:L},x_{1:L},\tau_{1:L}\right)\right)^T \left(\frac{\partial}{\partial\theta}F_{\theta}\left(r'_{1:L},x'_{1:L},\tau'_{1:L}\right)\right)}{\sum_{v_{1:L}\in A^L_K\left(z\right)}\int_{C_{v_{1:L}}} \left[\int_{\mathbb{R}^{3\times L}}\Bigg(\int_{\Delta^L}F_{\theta}\left(r_{1:L},x_{1:L},\tau_{1:L}\right)d\tau_{1:L}\Bigg)dx_{1:L}\right]dr_{1:L}}d\tau_{1:L}d\tau'_{1:L}\Bigg]dx_{1:L}dx'_{1:L}\Bigg\}dr'_{1:L}dr_{1:L}\Bigg)\nonumber\\
&=\sum_{L=0}^{\infty} p_L\sum_{z \in \mathbb{N}_0^K: |z|=L}\left(\frac{\sum_{v_{1:L}\in A^L_K\left(z\right)}\sum_{v'_{1:L}\in A^L_K\left(z\right)}\int_{C_{v_{1:L}}}\int_{C_{v'_{1:L}}}DF^T_{\theta}\left(r_{1:L}\right)DF_{\theta}\left(r'_{1:L}\right)dr'_{1:L}dr_{1:L}}{\sum_{v_{1:L}\in A^L_K\left(z\right)}\int_{C_{v_{1:L}}} \left[\int_{\mathbb{R}^{3\times L}}\Bigg(\int_{\Delta^L}F_{\theta}\left(r_{1:L},x_{1:L},\tau_{1:L}\right)d\tau_{1:L}\Bigg)dx_{1:L}\right]dr_{1:L}}\right).
\end{align}
\end{footnotesize}
\begin{footnotesize}
\begin{align}
\label{noisy_fim_eq}
I^p(\theta):&=E_{Pr^{\theta}\left[\mathcal{I}_1=i_1,\cdots,\mathcal{I}_K=i_K\right]}\left\{\left(\frac{\partial \log Pr^{\theta}\left[\mathcal{I}_1=i_1,\cdots,\mathcal{I}_K=i_K\right]}{\partial\theta}\right)^T\left(\frac{\partial \log Pr^{\theta}\left[\mathcal{I}_1=i_1,\cdots,\mathcal{I}_K=i_K\right]}{\partial\theta}\right)\right\}\nonumber\\
&=\int_{\mathbb{R}^K}Pr^{\theta}\left[\mathcal{I}_1=i_1,\cdots,\mathcal{I}_K=i_K\right]\left(\frac{\partial \log Pr^{\theta}\left[\mathcal{I}_1=i_1,\cdots,\mathcal{I}_K=i_K\right]}{\partial\theta}\right)^T\left(\frac{\partial \log Pr^{\theta}\left[\mathcal{I}_1=i_1,\cdots,\mathcal{I}_K=i_K\right]}{\partial\theta}\right)di_{1:K}\nonumber\\
&=\int_{\mathbb{R}^K}\frac{1}{Pr^{\theta}\left[\mathcal{I}_1=i_1,\cdots,\mathcal{I}_K=i_K\right]}\left(\frac{\partial Pr^{\theta}\left[\mathcal{I}_1=i_1,\cdots,\mathcal{I}_K=i_K\right]}{\partial\theta}\right)^T\left(\frac{\partial Pr^{\theta}\left[\mathcal{I}_1=i_1,\cdots,\mathcal{I}_K=i_K\right]}{\partial\theta}\right)di_{1:K}\nonumber\\
&=\int_{\mathbb{R}^K}\frac{1}{Pr^{\theta}\left[\mathcal{I}_1=i_1,\cdots,\mathcal{I}_K=i_K\right]}\left(\sum_{z_1,\cdots,z_K=0}^{\infty}p_{\mathcal{I}_{1:K}|S_{1:K}}\left(i_{1:K}|z_{1:K}\right)\frac{\partial Pr^{\theta}\left[S_1=z_1,\cdots,S_K=z_K\right]}{\partial\theta}\right)^T\nonumber\\
&\ \ \ \ \ \ \ \ \ \ \ \ \ \ \ \ \ \ \ \ \ \times\left(\sum_{z'_1,\cdots,z'_K=0}^{\infty}p_{\mathcal{I}_{1:K}|S_{1:K}}\left(i_{1:K}|z'_{1:K}\right)\frac{\partial Pr^{\theta}\left[S_1=z'_1,\cdots,S_K=z'_K\right]}{\partial\theta}\right)di_{1:K}\nonumber\\
&=\int_{\mathbb{R}^K}\frac{\sum_{z_1,\cdots,z_K=0}^{\infty}\sum_{z'_1,\cdots,z'_K=0}^{\infty}p_{\mathcal{I}_{1:K}|S_{1:K}}\left(i_{1:K}|z_{1:K}\right)p_{\mathcal{I}_{1:K}|S_{1:K}}\left(i_{1:K}|z'_{1:K}\right)}{\sum_{z_1,\cdots,z_K=0}^{\infty}p_{\mathcal{I}_{1:K}|S_{1:K}}\left(i_{1:K}|z_{1:K}\right)Pr^{\theta}\left[S_1=z_1,\cdots,S_K=z_K\right]}\nonumber\\
&\ \ \ \ \ \ \ \ \ \ \ \ \ \ \ \ \ \ \ \ \ \times\left(\frac{\partial Pr^{\theta}\left[S_1=z_1,\cdots,S_K=z_K\right]}{\partial\theta}\right)^T\left(\frac{\partial Pr^{\theta}\left[S_1=z'_1,\cdots,S_K=z'_K\right]}{\partial\theta}\right)di_{1:K}.
\end{align}
\end{footnotesize}
\hrulefill
\end{figure*}

2. It results from the similar approach used in part 1.

\subsection{Effect of noise on Fisher information matrix}
\label{noisy_fim}

In the following, we show the derivation process of Eq. (\ref{noisy_fim_eq}). By substituting Eq. (\ref{prac3}) into the general equation of the Fisher information matrix, we then can obtain the Fisher information expression in terms of $Pr^{\theta}\left[S_1=z_1,\cdots,S_K=z_K\right]$ by the process shown by Eq. (\ref{noisy_fim_eq}).

\bibliographystyle{ieeetr}
\bibliography{MiladVahid}

%








\end{document}